\newtheorem{assumption}{Assumption}
\newtheorem{theorem}{Theorem}
\newtheorem{lemma}{Lemma}
\begin{document}
%
\title{Consensus-based Distributed Quantile Estimation in Sensor Networks}
%
%
%

\author{Jongmin~Lee,~\IEEEmembership{}
        Cihan~Tepedelenlioglu,~\IEEEmembership{Senior~Member,~IEEE,}
        and~Andreas~Spanias,~\IEEEmembership{Fellow,~IEEE}
        \vspace{-5mm}
\thanks{This work was supported in part by the NSF award ECCS 1307982 and the SenSIP Center. The authors are with the School of Electrical, Computer, and Energy Engineering Arizona State University, Tempe, AZ 85287, USA e-mail: jongmin.lee@asu.edu; cihan@asu.edu; spanias@asu.edu.}
}

\maketitle

\begin{abstract}
A quantile is defined as a value below which random draws from a given distribution falls with a given probability. In a centralized setting where the cumulative distribution function (CDF) is unknown, the empirical CDF (ECDF) can be used to estimate such quantiles after aggregating the data. In a fully distributed sensor network, however, it is challenging to estimate quantiles. This is because each sensor node observes local measurement data with limited storage and data transmission power which make it {}{difficult to} obtain the global ECDF. This paper proposes consensus-based quantile estimation for such a distributed network. The states of the proposed algorithm are recursively updated with two-steps at each iteration: one is a \emph{local update} based on the measurement data and the current state, and the other is \emph{averaging} the updated states with neighboring nodes. We consider the realistic case of communication links between nodes being corrupted by independent random noise. It is shown that the estimated state sequence is asymptotically unbiased and converges toward the sample quantile in the mean-square sense. The two step-size sequences corresponding to the averaging and local update steps result in a mixed-time scale algorithm with proper decay rates in order to achieve convergence. We also provide applications to distributed estimation of trimmed mean, computation of median, maximum, or minimum values and identification of outliers through simulation.  
\end{abstract}

\begin{IEEEkeywords}
consensus, distributed quantile estimation, wireless sensor networks, measures of central tendency.
\end{IEEEkeywords}

%
\IEEEpeerreviewmaketitle

\section{Introduction} \label{introduction}

\IEEEPARstart{D}{istributed} sensors measure physical phenomena observable over a certain region and fuse the sensed information by local communications. This type of network is scalable and energy efficient because each node shares its data only with neighbors. A traditional problem in this domain is to estimate the sample average of measurements by iteratively averaging the states with neighboring ones, and achieve a \emph{consensus} on the global average of the initial measurements \cite{Boyd2004,Xiao2005,Olfati2007}. This has influenced many distributed estimation applications due to the broad use of the arithmetic mean in signal processing techniques.

Distributed average consensus of sensor measurement data can be used in monitoring applications. One example is monitoring average temperature (or, other statistical metrics) over a sensor network in remote areas. The arithmetic mean of temperature data represents the \emph{central tendency} of temperature. However, the mean can be vulnerable, as a measure of central tendency, to the skewness of the distribution. Outliers can also cause bias to the sample mean. An alternative metric is the median that represents the midpoint which divides the ordered dataset into two subsets of equal size. More generally \emph{quantiles} are the generalized inverse of the CDF at a certain probability. Beyond estimating the median, quantiles can be used in various applications such as outlier removal and computation of robust statistics from a set of measurement data by eliminating the values higher (or lower) than a certain cutpoint. One such robust statistic is the trimmed mean which is an average of the data excluding outliers. Maximum and minimum values can be viewed as extreme examples of quantiles. Quantile regression estimates the conditional quantiles of measurement data distribution where the statistics such as mean and variance may change over time. This method has been used in a variety of machine learning \cite{Meinshausen2006} as well as statistical applications \cite{Koenker2005}.

\begin{figure}
	\centering
	\vspace{-0mm}
	\includegraphics[width=0.9\linewidth]{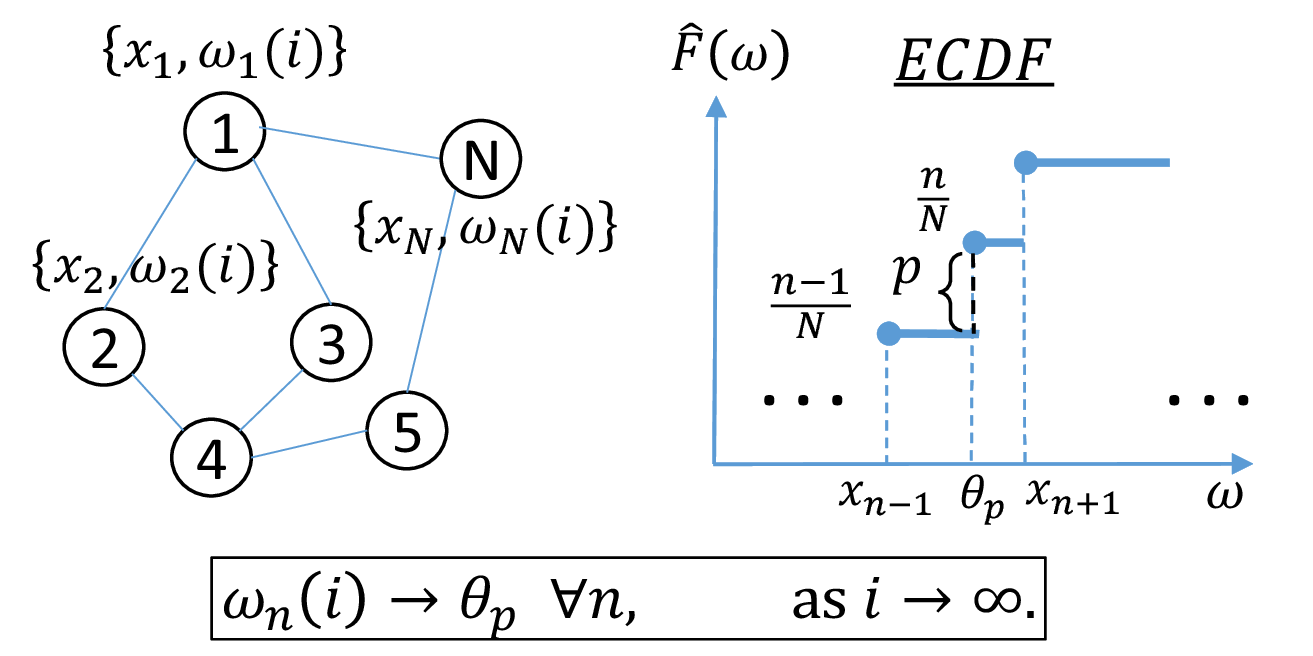}
	\caption{Overview of distributed quantile estimation. State at every node converges to a desired quantile $\theta_p$ defined as \eqref{eq:theta_p}.}
	\vspace{-6mm}
	\label{fig:system_model}
\end{figure}

In this paper, we consider the quantile estimation problem in a distributed setting, as illustrated in Fig. \ref{fig:system_model}. This is necessary if nodes in a distributed network have local measurement data only but want to know the quantile value without the knowledge of the global aggregate CDF estimate at each node. The sensor network is assumed to be fully distributed where there is no fusion center and the sensors are unlabeled. Each node maintains its own data and state of estimate, and communicates the information only with neighboring nodes via noisy communication links between nodes. Detailed knowledge of the network graph structure beyond neighbors is inaccessible to every node. The states of quantile estimates are recursively updated with two steps at each iteration. The \emph{local update} step is based on the individual measurement data and the current state of the quantile estimate. The updates are transfered to the neighboring nodes by \emph{averaging} the estimates. We analyze convergence behavior by showing that the estimated state sequence is asymptotically unbiased and converges toward the true quantile in the mean-square sense. The proposed algorithm is applicable for identifying outliers and calculating median as well as trimmed mean.


There are significant number of works related to consensus-based distributed parameter estimation. See \cite{Tsitsiklis1984,Tsitsiclis1986,Bertsekas1989,Yin1991} for the early works, which inspired numerous applications. Distributed least-mean-square (LMS) algorithm is introduced to estimate a linear system parameter in various scenarios \cite{Sayed2008,Sayed2013bookch,Giannakis2009,Stankovic2011,Lee2015}. In these works, sensors observe random data at very iteration, generated by a linear system with a parameter vector. In \cite{Kar2012}, the authors proposed the consensus plus innovation scheme for distributed parameter estimation with single- and mixed-time scales. They consider nonlinear as well as linear system models and show convergence analysis. They assume that the sensors observe random data at every iteration and the observation model is continuous and invertible. In contrast, our model uses the empirical CDF (ECDF) which is discontinuous and non-invertible. Our work may be considered as a root finding problem which is similar with Robbins-Monro stochastic approximation algorithm \cite{Robbins1951}, but we consider a distributed graph network setup. Reference \cite{Bianchi2013} shows a performance analysis for Robbins-Monro algorithm in a distributed framework, where they considered the asynchronous random gossip algorithms \cite{Boyd2006RGA} with random data observation of a continuous function at every iteration. However, our work assumes that the size of measurement data is finite, utilizing ECDF which is nonlinear, discontinuous, and non-invertible.

For a limited size of measurement data in distributed networks, our work can be considered as solving a distributed node selection problem, which is to find the $n$-th smallest measurement out of $N$ data samples which can be related to $n$-th quantile. The references \cite{Kempe2003,Kuhn2007,Negro1997,SANTORO1992} are a few most closely related to ours. They are similar with our work in that each node maintains a piece of the entire data set and quantile state information, wishing to identify $n$-th smallest data. However, their main contributions are fundamentally different than ours. In \cite{Kempe2003}, a leader node is chosen to maintain candidates of quantiles at each round of the gossip protocol, reducing the number of candidates until only a single candidate is left. In \cite{Kuhn2007}, it is assumed that every node knows the network's diameter which is defined as the length of the ``longest shortest path'' between any two sensor nodes. However, their algorithm needs to maintain a set of candidates for the $n$-th selection steadily reducing the set until it reaches the desired element under a certain criterion. Reference \cite{Negro1997} depends on guessing and selection strategy to find the $n$-th smallest element. Their algorithm maintains a set of control messages such as \emph{start}, \emph{small}, \emph{large}, and \emph{stop} where the messages are transmitted to the entire network at every communication iteration. A distributed selection algorithm in \cite{SANTORO1992} is based on broadcasting the control messages, and increases the number of message exchanges as the network size becomes larger. In contrast, our algorithm is a fully distributed method without any type of leading nodes or candidate sets. The algorithm is also scalable because control messages are not transmitted to every node. Furthermore, our work considers the more realistic case of communication links between nodes being corrupted by independent random noise, whereas the references above are based on noiseless communication links.

This paper consists of the following. In Section \ref{graph_theory} we review graph theory. Section \ref{system_model} and \ref{problem_statement} describe system model and problem statements followed by the proposed algorithm in Section \ref{distributed_quantile_estimation}. Convergence analysis is provided in Section \ref{convergence_analysis}. We illustrate the proposed algorithm with simulations in Section \ref{simulations}. We describe conclusions in Section \ref{conclusion}.

\vspace{-1mm}
\section{Network Graph Theory}\label{graph_theory} 

There is an undirected graph $\mathbb{G} = \left(\mathbb{N},\mathbb{E}\right)$ containing a set of nodes $\mathbb{N} = \left\{1,\ldots,N\right\}$ and a set of edges $\mathbb{E}$. The neighbors of node $n$ is denoted by $\mathbb{N}_n = \left\{ l | \left\{n,l\right\} \in \mathbb{E}\right\}$ where $\left\{n,l\right\}$ is an edge between the nodes $n$ and $l$ \cite{Chung97a}. Each node communicates with neighbors via the edges. The degree $d_n$ denotes the number of neighbors at node $n$, and $d_{\text{max}}$ the maximum degree. A graph is \emph{connected} if there exists at least one path between every pair of nodes. The graph structure is described by adjacency matrix $\mathbf{A} = \left\{a_{nl}\right\}$, which is an $N \times N$ symmetric matrix, whose element in the $n$-th row and $l$-th column is $a_{nl} = 1$ if $\left\{n,l\right\} \in \mathbb{E}$. The diagonal matrix $\mathbf{D} = \mathrm{diag}\left[d_1,d_2,\ldots,d_N\right]$ represents the degrees of all the nodes. The Laplacian matrix is given by $\mathbf{L} = \mathbf{D} - \mathbf{A}$, where the eigenvalues of $\mathbf{L}$, denoting $\lambda_{n}(\mathbf{L})$ for $n \ge 1$, are non-negative and the number of zero eigenvalues can be shown to be the number of distinct components of the graph. There are, when the graph is connected, $\lambda_{1} = 0$ and $\lambda_{n} > 0$ for $n \ge 2$ so that the rank of $\mathbf{L}$ for a connected graph is $N-1$. The vector $\mathbf{1}$ {}{of all $1$'s} is the eigenvector of $\mathbf{L}$ associated with the eigenvalue $0$, i.e., $\mathbf{L}\cdot\mathbf{1} = 0\mathbf{1}$. The eigenvalue $\lambda_{2}$ is called the algebraic connectivity and characterizes how densely the graph is connected.

\vspace{-1mm}
\section{System Model} \label{system_model} 
\normalsize
Consider $N$ sensor nodes over a connected and undirected distributed network $\mathbb{G} = \left(\mathbb{N},\mathbb{E}\right)$ where there is no fusion center. Due to the connectedness, the eigenvalue $\lambda_{2}$ of the Laplacian matrix $\mathbf{L}$ is positive. Each node $n$ has a scalar measurement denoted by $x_n \in \mathbb{R}$, where $n=1,\ldots,N$, and $\{x_n\}_{n=1}^N$ defines ECDF as follows:

\vspace{-4mm}
\small
\begin{align}
\widehat{F}(\omega;\mathbf{x}) = \frac{1}{N}\sum_{n=1}^{N} u(\omega - x_n)   \label{eq:p_ecdf}
\end{align}
\normalsize
where the step function $u(\cdot)$ is given by

\vspace{-4mm}
\small
\begin{align}
u(\omega - x_n) = \left\{\begin{array}{cc}
1, & \text{if }  \omega \ge x_n  \\ 
0, & \text{otherwise}
\end{array}\right. 
\end{align}
\normalsize
{where $N$ is known to every node $n$, which can be obtained by node counting algorithms such as \cite{SZhang2017}.}

Without loss of generality, it can be assumed that the measurement data set is sorted in ascending order. Let $\mathbf{x} = \left[x_1,\ldots,x_N\right]^T$ where $x_1 \le \cdots \le x_N$. Each node maintains a real-valued scalar state to be updated for quantile estimation. Let $\omega_n(i)$ denote the state of node $n$ at time $i$. The state is transferred to neighboring nodes via wireless links in the presence of random communication noise $\xi_{nl}(i)$ from node $l$ to $n$. Random noise on the link from $l$ to $n$ is assumed independent and identically distributed (i.i.d.) random process $\{\xi_{nl}(i)\}_{i \ge 0}$ with zero mean and variance $E\left[\xi_{nl}^2(i)\right]$ where $\sup_{n,l,i} E\left[\xi_{nl}^2(i)\right] < \infty$. As the communication iteratively continues, node $n$ updates its own state $\omega_n(i)$ based on its own measurement $x_n$ and neighbors' states $\{\omega_l(i)\}_{l \in \mathbb{N}_n}$ where $\mathbb{N}_n$ denotes the set of neighboring nodes of $n$. 

Let $0 < p < 1$ denote the probability that corresponds to a quantile $\theta_p$. When $p = 0.5$, the corresponding quantile $\theta_{0.5}$ is the median of $\mathbf{x}$. When $p=0.75$, the corresponding $\theta_{0.75}$ indicates that 75\% of measurement data is less than or equal to $\theta_{0.75}$. Note that the ECDF $\widehat{F}(\omega;\mathbf{x})$ in \eqref{eq:p_ecdf} is a stair-case function, and $\theta_p$ is {}{a generalized} inverse of the ECDF in some {}{appropriate} sense. More formally, for the ECDF $\widehat{F}(\omega,\mathbf{x})$, the relation between $p$ and $\theta_p$ can be defined \cite{Hyndman1996} as

\vspace{-4mm}
\small
\begin{align}
\theta_p =  \inf_\omega \left\{\omega: \widehat{F}(\omega;\mathbf{x}) \ge p \right\}. \label{eq:theta_p}
\end{align}
\normalsize
%
%

Quantiles may be centrally obtained by using the ECDF of \eqref{eq:p_ecdf} after collecting all the measurement data $\bf{x}$. Practically in distributed wireless sensor networks, however, the centralized method is not directly applicable.

\vspace{-1mm}
\section{Problem Statement} \label{problem_statement} 
Since each node has only a single measurement data, it may be impossible to know the global ECDF $\widehat{F}(\omega;\mathbf{x})$ in large-scale networks. In addition, it is difficult to synchronize the local states of all nodes (i.e., having $\{\omega_n(i)\}_{n=1}^N$ to be $\omega(i)$ for all $n$) at every iterative update over the network. The centralized method may require transmission of the measurement data $\mathbf{x}$ and the states $\{\omega_n(i)\}_{n=1}^N$ from all nodes to a fusion center with undesirable transmission power consumption. Also, all the information exchange is corrupted by communication random noise. Despite the constraints mentioned above, we want every node $n$ to estimate the quantile $\theta_p$ for a given {}{$0<p<1$} as $i \to \infty$.

Suppose the ECDF of \eqref{eq:p_ecdf} in a fully distributed network $\mathbb{G} = \left(\mathbb{N},\mathbb{E}\right)$. There is no fusion center to collect the measurement data. Each node $n$ communicates within neighborhood $\mathbb{N}_n$ via wireless communication channel corrupted by random noise, as described in Section \ref{system_model}. Given $x_n$ and $p \notin \{\frac{1}{N}, \frac{2}{N} \ldots, 1\}$ at node $n$ (i.e., $p$ does not correspond to a point of discontinuous ECDF), we want a distributed quantile estimation algorithm that generates the state $\omega_n(i)$ such that, as $i\to \infty$,

\vspace{-4mm}
\small 
\begin{align} 
\omega_n(i) \to \theta_p,\,\,\,\,\,\forall n. \label{eq:omega_ni_probelmstatement}
\end{align}
\normalsize 
We consider without loss of generality $p \notin \{\frac{1}{N}, \frac{2}{N} \ldots, 1\}$ in this paper. If instead $p=\frac{n}{N}$, estimated quantile can be shown to be any value within the interval $\left[x_n,x_{n+1}\right)$.

\vspace{-2mm}
\section{Distributed Quantile Estimation} \label{distributed_quantile_estimation}
A consensus-based distributed algorithm is proposed where for a given $p$ each node $n$ locally updates $\omega_n(i)$ that satisfies \eqref{eq:omega_ni_probelmstatement}{}{.} Let $\omega_n(i)$ and $\psi_n(i)$ denote respectively the state of quantile estimate and an intermediate state variable at iteration $i$. Node $n$ updates its state $\omega_n(i)$ based on the local measurement data $x_n$ for the given constant $p$. {}{The algorithm consists of a \emph{local update} of the intermediate variable $\psi_n(i)$ followed by an \emph{averaging} step where $\omega_n(i)$ is updated.} The \emph{local update} step is given by

\vspace{-4mm}
\small
\begin{align}
\psi_n(i) = \omega_n(i) - \alpha(i) \Big[u\big(\omega_n(i)-x_n\big) - p\Big],\,\,\,\,\forall n, \,\,i\ge0, \label{eq:local_update_approx_u}
\end{align}
\normalsize
where $\{\alpha(i)\}_{i \ge 0}$ is a deterministic step-size sequence that will be explained later in detail. The local update step of \eqref{eq:local_update_approx_u} at node $n$ {}{involves} with its own data $x_n$. The \emph{averaging} step at node $n$ is then performed by

\vspace{-4mm}
\small
\begin{align}
\omega_n(i+1) = \psi_n(i) - \eta(i) \sum_{l \in \mathbb{N}_n} \Big[\psi_n(i) - \big(&\psi_l(i) + \xi_{nl}(i)\big) \Big], \label{eq:avg_step}
\end{align}
\normalsize
$\forall n$, $i \ge 0$, where $\psi_l(i)$ denotes the state transmitted from node $l$ with being perturbed at node $n$ by communication random noise $\xi_{nl}(i)$,  $\mathbb{N}_n$ denotes the set of neighbors of node $n$, and $\eta(i)$ is the step-size that controls exchange rate of node $n$ with neighboring nodes at time $i$. We consider a deterministic sequence $\{\eta(i)\}_{i\ge0}$ that will be explained later in this section. 

We now describe the algorithm in \eqref{eq:local_update_approx_u} and \eqref{eq:avg_step} in vector form. Let $\bm{\omega}(i) = \left[\omega_1(i),\ldots,\omega_N(i)\right]^T$ and $\bm{\psi}(i) = \left[\psi_1(i),\ldots,\psi_N(i)\right]^T$. Laplacian matrix $\mathbf{L}$ is described in Section II. Equations \eqref{eq:local_update_approx_u} and \eqref{eq:avg_step} can be {}{expressed} as

\vspace{-4mm}
\small
\begin{align}
\bm{\psi}(i) &= \bm{\omega}(i) - \alpha(i) \mathbf{y}(i), \label{eq:local_update_approx_u_vector} \\
\bm{\omega}(i+1) &= \big(\mathbf{I} - \eta(i) \mathbf{L}\big) \bm{\psi}(i) - \eta(i) \bm{\xi}(i), \label{eq:avg_step_vector}
\end{align}
\normalsize
where

\vspace{-4mm}
\small
\begin{align}
\mathbf{y}(i)&=\big[y_1(i),\ldots,y_N(i)\big]^T, \\
y_n(i) &\triangleq u\big(\omega_n(i) - x_n\big) - p,\,\,\,\,\,\forall n, \label{eq:y_n(i)} \\
\bm{\xi} (i) &= - \Bigg[\sum\limits_{l \in \mathbb{N}_1} \xi_{1l}(i), \ldots, \sum\limits_{l \in \mathbb{N}_N} \xi_{Nl}(i)\Bigg]^T. \label{eq:xi(i)}
\end{align}
\normalsize
Combining \eqref{eq:local_update_approx_u_vector} and \eqref{eq:avg_step_vector}, we can express the distributed quantile estimation algorithm as, for $i \ge 0$,

\vspace{-4mm}
\small
\begin{align}
\bm{\omega}(0) &= \mathbf{x},\nonumber \\
\bm{\omega}(i+1) &= \big(\mathbf{I} - \eta(i)\mathbf{L}\big)\big(\bm{\omega}(i)- \alpha(i)\mathbf{y}(i)\big) - \eta(i) \bm{\xi}(i).\label{eq:combined_averaging_and_local_update_w_noise}
\end{align}
\normalsize

The step-sizes satisfy the persistence condition:

\vspace{-4mm}
\small 
\begin{align}
&\alpha(i) > 0, \,\,\sum_{i=0}^{\infty}\alpha(i) = \infty,\,\, \sum_{i=0}^{\infty} \alpha^2(i) < \infty, \label{eq:persistent_alpha}\\
&\eta(i) > 0, \,\,\sum_{i=0}^{\infty}\eta(i) = \infty,\,\, \sum_{i=0}^{\infty} \eta^2(i) < \infty. \label{eq:persistent_eta}
\end{align}
\normalsize
The conditions \eqref{eq:persistent_alpha} and \eqref{eq:persistent_eta} imply that decaying rates of the step-sizes are fast but not too fast. This condition has been commonly used for convergence analysis, based on conventional stochastic approximation theory \cite{Nevelson1973,Spall2003,Chen2006}. However, the distributed quantile estimation algorithm \eqref{eq:combined_averaging_and_local_update_w_noise} is a combined vector form of \eqref{eq:local_update_approx_u_vector} and \eqref{eq:avg_step_vector}, and results in a mixed-time scale for the iterative updates in \eqref{eq:combined_averaging_and_local_update_w_noise}. {}{For $\bm{\omega}(i)$ to converge as $i \to \infty$, the step-size $\alpha(i)$ needs to decrease faster than $\eta(i)$. Rewriting the algorithm of \eqref{eq:combined_averaging_and_local_update_w_noise}, we have for $i \ge 1$}

\vspace{-4mm}
\small
\begin{align}
{}{
\bm{\omega}(i+1) = \bm{\omega}(i) - \eta(i)\mathbf{L}\big(\bm{\omega}(i)  - \alpha(i)\mathbf{y}(i)\big) - \alpha(i) \mathbf{y}(i) - \eta(i)\bm{\xi}(i).} \label{eq:combined_averaging_and_local_update_w_noise_stochastic_theory}
\end{align}
\normalsize
{}{The output element of vector $\mathbf{y}(i)$ is $-p$ or $1-p$ by the definition \eqref{eq:y_n(i)}. $\mathbf{L}\bm{\omega}(i)$ in \eqref{eq:combined_averaging_and_local_update_w_noise_stochastic_theory} indicates an average state vector of neighboring nodes. If the decaying rate of $\alpha(i)$ was slower than that of $\eta(i)$, $\bm{\omega}(i)$ would not converge because $\alpha(i)\mathbf{y}(i)$ contributes to $\bm{\omega}(i+1)$ more significantly than $\eta(i)\mathbf{L}\bm{\omega}(i)$, as iteration continues. A convergence analysis for such a mixed-time scale approach with appropriate choices of step-sizes was also used in \cite{Kar2012}. Nevertheless, $\mathbf{y}(i)$ plays an important role in estimating the true quantile. $\mathbf{L}\mathbf{y}(i)$ in \eqref{eq:combined_averaging_and_local_update_w_noise_stochastic_theory} is an approximate ECDF minus $p$ that can contribute to how much $\bm{\omega}(i+1)$ is changed from $\bm{\omega}(i)$ by $\mathbf{L}\bm{\omega}(i) - \alpha(i)\mathbf{L}\mathbf{y}(i)$. If $\mathbf{L}\mathbf{y}(i)$ is large, which means the approximate ECDF with the current state $\bm{\omega}(i)$ results in large error compared to the desired $p$, then $\bm{\omega}(i+1)$ needs to update significantly. As $\bm{\omega}(i)$ approaches the true quantile, $\mathbf{L}\mathbf{y}(i)$ becomes small, and $\bm{\omega}(i+1)$ changes small as well. }


We summarize the assumption for step-sizes that will be needed for the convergence.

\begin{assumption}
	(Decreasing step-sizes) The step-size $\alpha(i)$ in \eqref{eq:local_update_approx_u_vector} decreases faster than $\eta(i)$ in \eqref{eq:avg_step_vector} with the forms:
	
	\vspace{-4mm}
	\small
	\begin{align}
	\alpha(i) = \frac{\alpha_0}{(i+1)^{\tau_1}}\,\,\,\,\text{and}\,\,\,\,\eta(i) = \frac{\eta_0}{(i+1)^{\tau_2}},\,\,\,\,\text{for}\,\,i\ge 0, \label{eq:step_size_forms}
	\end{align}
	\normalsize
	where $\tau_1$ and $\tau_2$ denote constant decaying rates of $\alpha(i)$ and $\eta(i)$, respectively, $\alpha_0$ and $\eta_0$ are positive initial step-sizes, and $1 \ge \tau_1 > \tau_2 > 0.5$. Moreover, $\tau_1 - \tau_2 < 0.5$. \label{assumption:stepSizes}
\end{assumption}
One example of the step-size choice that satisfies Assumption \ref{assumption:stepSizes} is $\tau_1 = 1$ and $\tau_2 = 0.505$. As $\tau_1 - \tau_2$ decreases, difference between the decaying rates of $\alpha(i)$ and $\eta(i)$ also decreases, resulting in slowly decreasing {}{$\bm{\omega}(i)$ as stated in \eqref{eq:combined_averaging_and_local_update_w_noise_stochastic_theory}.}

\vspace{-2mm}
\section{Convergence Analysis} \label{convergence_analysis}
In this section we analyze convergence behavior of the distributed quantile estimation algorithm in \eqref{eq:local_update_approx_u} - \eqref{eq:avg_step}, or equivalently in \eqref{eq:combined_averaging_and_local_update_w_noise}. It is shown that the state sequence $\{\omega_n(i)\}_{i \ge 0}$ at node $n$ is asymptotically unbiased in Theorem \ref{theorem:unbiasedness} and the estimated sequence converges to the true quantile $\theta_p$ in mean-square sense in Theorem \ref{theorem:mean-square-convergence}. To achieve the above results, we use some properties of real number sequences described in Lemma \ref{lemma1} and bounded sequences in Lemma \ref{lemma:boundedness}.

{}
{Lemma \ref{lemma1} is used to show convergence property of the state sequence $\left\{\bm{\omega}(i)\right\}_{i \ge 0}$ with $\alpha(i)$ and $\eta(i)$ step sizes. Essentially, Lemma \ref{lemma1} is based on the stability study of a recursion form: $q(i+1) = r_1(i)q(i) + r_2(i)$ where $\left\{q(i)\right\}_{i \ge 0}$ is a state sequence and $r_1(i)$ and $r_2(i)$ are decreasing step size sequences.}
\begin{lemma}
	Consider the sequences $\{r_1(i)\}_{i\ge 0}$ and $\{r_2(i)\}_{i\ge 0}$, with non-negative constants $a_1$ and $a_2$,  which are given by
	
	\vspace{-4mm}
	\small
	\begin{align}
	r_1(i) = \frac{a_1}{\left(i+1\right)^{\delta_1}},\,\,\,\,\,r_2(i) = \frac{a_2}{\left(i+1\right)^{\delta_2}} \label{eq:lemma1_r1r2}
	\end{align}
	\normalsize
	where $0 \le \delta_1 \le 1$ and $\delta_2 \ge 0$. If $\delta_1 < \delta_2$, then, for arbitrary fixed $i_0$,
	
	\vspace{-4mm}
	\small
	\begin{align}
	\lim\limits_{i \to \infty} \sum_{k = i_0}^{i-1} \Bigg[\prod_{l=k+1}^{i-1}\big(1-r_1(l)\big)\Bigg] r_2(k) = 0.\label{eq:lemma1:0}
	\end{align}
	\normalsize
	\label{lemma1}
\end{lemma}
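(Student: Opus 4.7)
The plan is to combine a telescoping identity for the product factor with a splitting argument that exploits the gap $\delta_2 - \delta_1 > 0$. First I would introduce the shorthand $P(k,i) := \prod_{l=k+1}^{i-1}(1-r_1(l))$, with the empty product set to $1$. Since $r_1(l) = a_1/(l+1)^{\delta_1} \to 0$, by enlarging $i_0$ if necessary one may assume $r_1(l) \in (0,1)$ for all $l \ge i_0$, so that $P(k,i) \in (0,1]$ for all relevant $k$.

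The crux is the one-step recursion $P(k-1,i) = (1 - r_1(k))\,P(k,i)$, which rearranges to $P(k,i) - P(k-1,i) = r_1(k)\,P(k,i)$. Telescoping then yields
\begin{align}
\sum_{k=i_0}^{i-1} r_1(k)\,P(k,i) \;=\; P(i-1,i) - P(i_0-1,i) \;\le\; 1,
\end{align}
since $P(i-1,i) = 1$ and $P(i_0-1,i) \ge 0$. This furnishes a uniform-in-$i$ bound on the sum with weights $r_1(k)$, which is the workhorse of the argument.

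Next, because $\delta_2 > \delta_1$, the ratio $r_2(k)/r_1(k) = (a_2/a_1)(k+1)^{-(\delta_2 - \delta_1)}$ tends to zero. Given any $\epsilon > 0$, I would choose $K \ge i_0$ so that $r_2(k) \le \epsilon\, r_1(k)$ for all $k \ge K$, and split the target sum as $S(i) = S_1(i) + S_2(i)$, where $S_1(i) = \sum_{k=i_0}^{K-1} P(k,i)r_2(k)$ and $S_2(i) = \sum_{k=K}^{i-1} P(k,i)r_2(k)$. The tail satisfies $S_2(i) \le \epsilon \sum_{k=K}^{i-1} r_1(k)P(k,i) \le \epsilon$ by the telescoping bound. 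The head $S_1(i)$ has a fixed finite number of summands, and for each such $k$ the standard bound $P(k,i) \le \exp\!\big(-\sum_{l=k+1}^{i-1} r_1(l)\big)$ forces $P(k,i) \to 0$ as $i \to \infty$, since $\sum_l r_1(l) = \infty$ (using $a_1 > 0$ and $\delta_1 \le 1$). Hence $S_1(i) \to 0$, and because $\epsilon$ is arbitrary one concludes $S(i) \to 0$.

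The main obstacle I anticipate is the borderline case $\delta_1 = 1$: the partial sums $\sum r_1(l)$ diverge only at logarithmic rate, which still suffices for $P(k,i) \to 0$ at each fixed $k$ but requires slightly careful bookkeeping when one wants quantitative rates later in the paper. A secondary technical point is that the bound $1 - x \le e^{-x}$ used on $P(k,i)$ is only valid when each factor lies in $(0,1)$, which motivates the initial enlargement of $i_0$. Beyond these minor items, the structure is clean: telescoping yields the $r_1$-weighted uniform bound, and asymptotic dominance of $r_1$ over $r_2$ converts this into decay of the $r_2$-weighted sum.
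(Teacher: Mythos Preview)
Your argument is correct and self-contained. The telescoping identity $P(k,i)-P(k-1,i)=r_1(k)P(k,i)$ giving $\sum_k r_1(k)P(k,i)\le 1$, followed by the head/tail split driven by $r_2(k)/r_1(k)\to 0$, is exactly the right mechanism; the minor caveats you flag (enlarging $i_0$ so that each factor lies in $(0,1)$, and the need for $a_1>0$ so that $\sum_l r_1(l)=\infty$) are the only places requiring care, and you handle them properly. Note that the ``arbitrary fixed $i_0$'' in the statement is recovered from your argument because for any $k<i_0'$ one still has $|P(k,i)|\le C_k\prod_{l\ge i_0'}(1-r_1(l))\to 0$, so the finitely many extra terms vanish as well.

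As for comparison: the paper does not actually prove this lemma; it simply cites Lemma~25 of Kar~\cite{Kar2012}. So there is no in-paper argument to compare against. Your proof is in fact the standard one for this type of mixed-time-scale product bound and would serve perfectly well as a replacement for the external citation.
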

\begin{proof}\vspace{-3mm}
	See Lemma 25 in \cite{Kar2012}. 
\end{proof}

{}{Lemma \ref{lemma:boundedness} is used to show boundedness of sequences, combined with Lemma 1, in Theorem \ref{theorem:unbiasedness} and \ref{theorem:mean-square-convergence}.}
\begin{lemma}
	Define $\omega_{\text{avg}}(i) \triangleq \frac{1}{N} \mathbf{1}^T \bm{\omega}(i)$ that is the average of $\bm{\omega}(i)$ across all nodes at time $i$. Given the measurement data $\mathbf{x}$ and ratio $p$, the sequence $\{\eta(i)\}_{i \ge 0}$ of \eqref{eq:step_size_forms} satisfies
	
	\vspace{-4mm}
	\small
	\begin{align}
	\limsup_{i \to \infty} \eta(i) E \Big[\omega_{\text{avg}}(i) - \theta_p \Big] = 0, \label{eq:lemma:boundedness0} 
	\end{align}	\label{lemma:boundedness}
	\normalsize
\end{lemma}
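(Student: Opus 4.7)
The plan is to reduce the vector recursion \eqref{eq:combined_averaging_and_local_update_w_noise} to a scalar recursion for the network average $\omega_{\text{avg}}(i) = \frac{1}{N}\mathbf{1}^T\bm{\omega}(i)$ and then compare the growth of $|E[\omega_{\text{avg}}(i)-\theta_p]|$ against the decay of $\eta(i)$. First I would premultiply \eqref{eq:combined_averaging_and_local_update_w_noise} by $\frac{1}{N}\mathbf{1}^T$ and invoke $\mathbf{L}\mathbf{1}=\mathbf{0}$ (equivalently $\mathbf{1}^T\mathbf{L}=\mathbf{0}^T$) so that the Laplacian / consensus term is annihilated, yielding the scalar recursion
\begin{equation*}
\omega_{\text{avg}}(i+1) = \omega_{\text{avg}}(i) - \tfrac{\alpha(i)}{N}\mathbf{1}^T\mathbf{y}(i) - \tfrac{\eta(i)}{N}\mathbf{1}^T\bm{\xi}(i).
\end{equation*}
The crucial feature is that after this projection, the graph topology drops out entirely and the dynamics of $\omega_{\text{avg}}(i)$ are driven solely by $\alpha(i)\mathbf{y}(i)$ and the averaged noise.

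Next I would take expectations. The noise term vanishes since each $\xi_{nl}(i)$ has zero mean, leaving a deterministic scalar recursion for $E[\omega_{\text{avg}}(i)-\theta_p]$ with per-step drift $-\frac{\alpha(i)}{N}E[\mathbf{1}^T\mathbf{y}(i)]$. From definition \eqref{eq:y_n(i)}, each entry $y_n(i)\in\{-p,\,1-p\}$, so $|\mathbf{1}^T\mathbf{y}(i)|/N\le\max(p,1-p)\le 1$ and the absolute increment is bounded by $\alpha(i)$. Telescoping from $i=0$ and applying the triangle inequality then gives the a priori bound
\begin{equation*}
|E[\omega_{\text{avg}}(i)-\theta_p]| \le |\omega_{\text{avg}}(0)-\theta_p| + \sum_{k=0}^{i-1}\alpha(k),
\end{equation*}
in which $\omega_{\text{avg}}(0)=\frac{1}{N}\sum_{n}x_n$ is a deterministic constant depending only on the initial measurements.

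The conclusion then follows by matching the step-size rates under Assumption \ref{assumption:stepSizes}. With $\alpha(k)=\alpha_0/(k+1)^{\tau_1}$ and $\tau_1\in(0.5,1]$, the partial sum satisfies $\sum_{k=0}^{i-1}\alpha(k)=O(i^{1-\tau_1})$ when $\tau_1<1$ and $O(\log i)$ when $\tau_1=1$. Multiplying by $\eta(i)=\eta_0/(i+1)^{\tau_2}$, the product $\eta(i)|E[\omega_{\text{avg}}(i)-\theta_p]|$ is $O(i^{1-\tau_1-\tau_2})$ in the first case and $O(\log(i)/i^{\tau_2})$ in the second. Both vanish as $i\to\infty$: the first because $\tau_1>\tau_2>0.5$ forces $\tau_1+\tau_2>1$, and the second because $\tau_2>0$. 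This yields the claimed $\limsup_{i\to\infty}\eta(i)E[\omega_{\text{avg}}(i)-\theta_p]=0$.

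The main obstacle will be conceptual rather than computational: one must recognize that averaging the vector recursion against $\mathbf{1}$ eliminates all graph-dependent coupling, so the mean drift of the network average is controlled solely by $\alpha(i)$ and the bounded quantity $\mathbf{1}^T\mathbf{y}(i)/N$. Once that decoupling is exploited, the remainder is a routine step-size bookkeeping argument that hinges on the inequality $\tau_1+\tau_2>1$, which is baked into Assumption \ref{assumption:stepSizes} via $\tau_1>\tau_2>0.5$; the strict lower bound $\tau_2>0.5$ is essential, as a slower decay of $\eta(i)$ would prevent it from dominating the possibly-growing expected average error.
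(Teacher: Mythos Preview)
Your proposal is correct and follows essentially the same route as the paper: project the recursion onto $\mathbf{1}$ to kill the Laplacian, take expectations to remove the zero-mean noise, bound $|\frac{1}{N}\mathbf{1}^T\mathbf{y}(i)|\le 1$, telescope to obtain $|E[\omega_{\text{avg}}(i)-\theta_p]|\le |\omega_{\text{avg}}(0)-\theta_p|+\sum_{k<i}\alpha(k)$, and then check that $\eta(i)\sum_{k<i}\alpha(k)\to 0$ under Assumption~\ref{assumption:stepSizes}. If anything, your rate-matching step is slightly more complete than the paper's, which only spells out the case $\tau_1=1$ explicitly, whereas you cover the full range $\tau_1\in(\tau_2,1]$ via $\tau_1+\tau_2>1$.
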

\begin{proof}\vspace{-5mm}
	See Appendix \ref{sec:append_proof_lemma:boundedness}.
\end{proof}

{}{The proposed algorithm estimates the true quantile parameter. We show that the quantile estimation is asymptotically unbiased in Theorem \ref{theorem:unbiasedness} and it converges to the quantile in mean-square sense in Theorem \ref{theorem:mean-square-convergence}. The mean-square convergence indicates stronger consistency than convergence in probability.}

\begin{theorem}
	(Asymptotic Unbiasedness) Consider that a constant ratio $p$ is given for estimating a certain quantile $\theta_p$. Under Assumption \ref{assumption:stepSizes}, the state sequence $\{\omega_n(i)\}_{i \ge 0}$ in \eqref{eq:local_update_approx_u} and \eqref{eq:avg_step} at node $n$ is asymptotically unbiased:
	
	\vspace{-4mm}
	\small
	\begin{align}
	\lim_{i \to \infty} E\big[\omega_n(i)\big] = \theta_p\,\,\,\,\,\,\,\text{for }1\le n \le N.
	\end{align} \label{theorem:unbiasedness}
	\normalsize
\end{theorem}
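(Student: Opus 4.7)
The plan is to decompose the state vector $\bm{\omega}(i)$ into its network average and disagreement, track each piece separately, and combine. Set $\omega_{\text{avg}}(i) \triangleq \frac{1}{N}\mathbf{1}^T\bm{\omega}(i)$ and $\tilde{\bm{\omega}}(i) \triangleq \bm{\omega}(i) - \omega_{\text{avg}}(i)\mathbf{1}$, so that $\tilde{\bm{\omega}}(i)$ lies in the subspace orthogonal to $\mathbf{1}$. Because $\mathbf{1}^T\mathbf{L} = 0$ and the noise is zero-mean, multiplying \eqref{eq:combined_averaging_and_local_update_w_noise} by $\frac{1}{N}\mathbf{1}^T$ yields a scalar stochastic recursion for $\omega_{\text{avg}}(i)$ whose drift term is $\frac{\alpha(i)}{N}\mathbf{1}^T\mathbf{y}(i)$, and projecting the same equation onto the orthogonal complement of $\mathbf{1}$ gives
\begin{align*}
\tilde{\bm{\omega}}(i+1) = (\mathbf{I} - \eta(i)\mathbf{L})\big(\tilde{\bm{\omega}}(i) - \alpha(i)\tilde{\mathbf{y}}(i)\big) - \eta(i)\tilde{\bm{\xi}}(i),
\end{align*}
where the tildes denote the respective projections. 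Since $E[\omega_n(i)] = E[\omega_{\text{avg}}(i)] + E[\tilde{\omega}_n(i)]$, the theorem reduces to showing $E[\omega_{\text{avg}}(i)] \to \theta_p$ and $E[\tilde{\omega}_n(i)] \to 0$.

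First I would bound the mean-square disagreement $q(i) \triangleq E\big[\|\tilde{\bm{\omega}}(i)\|^2\big]$. On the subspace orthogonal to $\mathbf{1}$, $\mathbf{I} - \eta(i)\mathbf{L}$ has spectral norm at most $1 - \eta(i)\lambda_2$ for $i$ large enough, while $\mathbf{y}(i)$ is uniformly bounded (entries lie in $[-p,\,1-p]$) and $\bm{\xi}(i)$ has uniformly bounded second moment. Squaring the recursion, taking expectations, and handling the cross term $\tilde{\bm{\omega}}(i)^T \tilde{\mathbf{y}}(i)$ via $2ab \le \eta(i)a^2 + b^2/\eta(i)$ absorbs it into $\frac{\eta(i)}{2}q(i)$ at the cost of an additional forcing term of order $\alpha^2(i)/\eta(i)$. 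The resulting recursion is of the form $q(i+1) \le (1 - c\,\eta(i))\,q(i) + r_2(i)$ with $r_2(i) = O\big(\eta^2(i) + \alpha^2(i)/\eta(i)\big)$. Under Assumption \ref{assumption:stepSizes} the decay exponent of $r_2$ is $\min(2\tau_2,\,2\tau_1 - \tau_2) > \tau_2$, so Lemma \ref{lemma1} yields $q(i) \to 0$, and then $|E[\tilde{\omega}_n(i)]| \le \sqrt{q(i)} \to 0$.

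Next I would establish $E[\omega_{\text{avg}}(i)] \to \theta_p$. Taking expectations in the scalar recursion kills the noise, leaving
\begin{align*}
E[\omega_{\text{avg}}(i+1) - \theta_p] = E[\omega_{\text{avg}}(i) - \theta_p] - \tfrac{\alpha(i)}{N}\,E[\mathbf{1}^T\mathbf{y}(i)].
\end{align*}
Once the disagreement is asymptotically negligible, $\frac{1}{N}\mathbf{1}^T\mathbf{y}(i)$ equals $\widehat{F}(\omega_{\text{avg}}(i);\mathbf{x}) - p$ up to a vanishing error. Because $p \notin \{1/N,\ldots,1\}$, the ECDF strictly exceeds $p$ just above $\theta_p$ and is strictly below $p$ just below $\theta_p$, so the drift points toward $\theta_p$. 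Combined with the tightness provided by Lemma \ref{lemma:boundedness}, which prevents $E[\omega_{\text{avg}}(i) - \theta_p]$ from escaping at the scale $\eta(i)^{-1}$, a further application of Lemma \ref{lemma1} to the resulting contractive recursion yields $E[\omega_{\text{avg}}(i)] \to \theta_p$.

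The main obstacle is justifying the drift step: the indicator $u(\cdot)$ is discontinuous, individual nodes can sit on different sides of the jumps of $\widehat{F}$, and the expectation of a discontinuous function is not the function of the expectation. Converting $E[\mathbf{1}^T\mathbf{y}(i)]$ into a monotone function of $E[\omega_{\text{avg}}(i)]$ requires combining the mean-square contraction of $\tilde{\bm{\omega}}(i)$ from the first step with the fact that $\theta_p$ is bounded away from every discontinuity of $\widehat{F}$, which is exactly why the problem statement excludes $p \in \{n/N\}$. Once this is handled, stitching the two pieces together delivers $E[\omega_n(i)] \to \theta_p$ for every $n$.
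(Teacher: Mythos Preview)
Your decomposition into the network average $\omega_{\text{avg}}(i)$ and the disagreement $\tilde{\bm{\omega}}(i)$ is a genuinely different organization from the paper's proof. The paper does \emph{not} split into these two pieces. Instead it introduces the positive-definite matrix $\mathbf{R} \triangleq \mathbf{L} + \frac{1}{N}\mathbf{1}\mathbf{1}^T$, rewrites the recursion for $\bm{\omega}(i)-\theta_p\mathbf{1}$ so that $\mathbf{I}-\eta(i)\mathbf{R}$ appears as the contraction matrix (with $\|\mathbf{I}-\eta(i)\mathbf{R}\| = 1-\eta(i)\lambda_{\min}(\mathbf{R})<1$), and then works directly with the scalar sequence $\big\|E[\bm{\omega}(i)]-\theta_p\mathbf{1}\big\|$. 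The price of replacing $\mathbf{L}$ by $\mathbf{R}$ is an added term $\eta(i)\big(E[\mathbf{z}(i)]-\theta_p\mathbf{1}\big)$, and this is exactly what Lemma~\ref{lemma:boundedness} is designed to kill. After iterating, Lemma~\ref{lemma1} handles the $\alpha(k)\|E[\mathbf{y}(k)]\|$ sum (using only $\|E[\mathbf{y}(k)]\|\le\sqrt{N}$), and Lemma~\ref{lemma:boundedness} handles the $\eta(k)\|E[\mathbf{z}(k)]-\theta_p\mathbf{1}\|$ sum.

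Your treatment of the disagreement $q(i)=E\big[\|\tilde{\bm{\omega}}(i)\|^2\big]$ is correct and in fact proves more than the paper needs at this stage (mean-square rather than mean). The weak point is the second half. You propose to make the scalar recursion for $E[\omega_{\text{avg}}(i)-\theta_p]$ contractive by arguing that the drift $\frac{1}{N}E[\mathbf{1}^T\mathbf{y}(i)]$ has the sign of $E[\omega_{\text{avg}}(i)-\theta_p]$; you then correctly identify that the discontinuity of $u(\cdot)$ obstructs this, since $E[\widehat{F}(\omega_{\text{avg}}(i);\mathbf{x})]$ is not a function of $E[\omega_{\text{avg}}(i)]$ and the communication noise injects randomness into $\omega_{\text{avg}}(i)$ that does not vanish (its variance is $\sum_j\eta^2(j)\sigma^2_{\xi_{\text{avg}}}$, bounded but not decaying). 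Your sketch of how to close this gap (``$\theta_p$ is bounded away from every discontinuity'') would still require distributional control of $\omega_{\text{avg}}(i)$, which you have not established and which Lemma~\ref{lemma:boundedness} does not supply. By contrast, the paper never uses the sign of $\mathbf{y}(i)$ at all: Lemma~\ref{lemma:boundedness} bounds $|E[\omega_{\text{avg}}(i)-\theta_p]|$ by $|\omega_{\text{avg}}(0)-\theta_p|+\sum_{j<i}\alpha(j)$ from the crude inequality $|y_{\text{avg}}|\le 1$, and then shows $\eta(i)\sum_{j<i}\alpha(j)\to 0$. So the obstacle you flag is real for your route but is simply absent from the paper's; the $\mathbf{R}$-trick plus Lemma~\ref{lemma:boundedness} is what you are missing.
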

\begin{proof}\vspace{-5mm}
	See Appendix \ref{sec:append_proof_theorem:unbiasedness}.
\end{proof}


\begin{theorem}
	(Mean-Square Convergence) Under Assumption \ref{assumption:stepSizes}, if $\sigma_\xi^2 < \infty$, then the sequence generated by the distributed quantile estimation algorithm \eqref{eq:combined_averaging_and_local_update_w_noise}, for a given ratio value $p$, converges to true quantile $\theta_p$ in mean-square sense:
	
	\vspace{-4mm}
	\small
	\begin{align}
	\lim_{i\to\infty} E\Big[\big\| \bm{\omega}(i) - \theta_p \mathbf{1}\big\|^2\Big] = 0, \label{eq:theorem2}
	\end{align}
	\normalsize  
	\label{theorem:mean-square-convergence}
\end{theorem}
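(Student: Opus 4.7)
The plan is to exploit the orthogonal decomposition of $\mathbb{R}^N$ relative to the consensus subspace spanned by $\mathbf{1}$. Writing $\bm{J}=\frac{1}{N}\mathbf{1}\mathbf{1}^T$, $\bm{\omega}_\perp(i)=(\mathbf{I}-\bm{J})\bm{\omega}(i)$, and $e(i)=\omega_{\text{avg}}(i)-\theta_p$, the Pythagorean identity yields
\begin{equation*}
E\bigl[\|\bm{\omega}(i)-\theta_p\mathbf{1}\|^2\bigr] = E\bigl[\|\bm{\omega}_\perp(i)\|^2\bigr] + N\,E\bigl[e^2(i)\bigr],
\end{equation*}
so it suffices to drive both the consensus-error term and the squared bias of the network average to zero. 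This split mirrors the mixed-time-scale structure of the algorithm: the averaging step driven by the faster sequence $\eta(i)$ primarily acts on $\bm{\omega}_\perp(i)$, while the innovation step driven by the slower $\alpha(i)$ primarily acts on the scalar $e(i)$.

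For the consensus error I would project \eqref{eq:combined_averaging_and_local_update_w_noise} onto the range of $\mathbf{I}-\bm{J}$. Because $\mathbf{L}\mathbf{1}=0$, the projector commutes with $\mathbf{I}-\eta(i)\mathbf{L}$, and on this subspace the latter contracts with factor at most $1-\eta(i)\lambda_2(\mathbf{L})$. Since $y_n(i)\in\{-p,1-p\}$ and $\sup_{n,l,i}E[\xi_{nl}^2(i)]<\infty$, the innovation and noise terms are uniformly bounded in second moment. Expanding the squared recursion, using independence of $\bm{\xi}(i)$ from the past, and absorbing the cross term by Young's inequality with parameter proportional to $\eta(i)$, I obtain
\begin{equation*}
E\bigl[\|\bm{\omega}_\perp(i+1)\|^2\bigr] \le (1-c_1\eta(i))\,E\bigl[\|\bm{\omega}_\perp(i)\|^2\bigr] + c_2\,\frac{\alpha^2(i)}{\eta(i)} + c_3\,\eta^2(i).
\end{equation*}
Under Assumption \ref{assumption:stepSizes} both perturbation rates decay strictly faster than the contraction rate $\eta(i)$, so Lemma \ref{lemma1} delivers $E[\|\bm{\omega}_\perp(i)\|^2]\to 0$.

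For the average, left-multiplying \eqref{eq:combined_averaging_and_local_update_w_noise} by $\frac{1}{N}\mathbf{1}^T$ annihilates the Laplacian term and gives $e(i+1) = e(i) - \alpha(i)\bar{y}(i) - \eta(i)\bar{\xi}(i)$ with $\bar{y}(i)=\frac{1}{N}\sum_n y_n(i)$ and $\bar{\xi}(i)=\frac{1}{N}\mathbf{1}^T\bm{\xi}(i)$. I would then decompose $\bar{y}(i)=[\widehat{F}(\omega_{\text{avg}}(i);\mathbf{x})-p]+\epsilon(i)$, where the consensus residual $\epsilon(i)=\frac{1}{N}\sum_n\bigl[u(\omega_n(i)-x_n)-u(\omega_{\text{avg}}(i)-x_n)\bigr]$ is pointwise bounded by the fraction of $x_n$ lying between $\omega_n(i)$ and $\omega_{\text{avg}}(i)$ and therefore controlled in second moment by the already-vanishing consensus error. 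The leading piece has the same sign as $e(i)$ because $\widehat{F}(\cdot;\mathbf{x})$ is nondecreasing and, since $p\notin\{k/N\}$, the point $\theta_p$ lies in the interior of a flat segment of $\widehat{F}^{-1}$, yielding a uniform strict-drift lower bound $|e(i)|\cdot|\widehat{F}(\omega_{\text{avg}}(i))-p|\ge \kappa|e(i)|$ on a neighborhood of $\theta_p$. Squaring, taking expectations, and invoking Theorem \ref{theorem:unbiasedness} together with Lemma \ref{lemma:boundedness} to bound the mixed $\eta(i)\,E[e(i)]$ residuals produces a scalar recursion to which Lemma \ref{lemma1} again applies, so $E[e^2(i)]\to 0$. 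Summing the two bounds yields \eqref{eq:theorem2}.

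The main obstacle is managing the discontinuity of $u(\cdot)$: $\widehat{F}$ is piecewise constant, $\bar{y}(i)$ is not Lipschitz in $\bm{\omega}(i)$, and no standard Robbins--Monro contraction argument applies directly. The hypothesis $p\notin\{1/N,\ldots,1\}$ is precisely what lets us replace a Lipschitz drift coefficient with a strict-sign drift on a neighborhood of $\theta_p$; the small-probability events on which some $\omega_n(i)$ straddles a data value $x_n$ must then be absorbed into the vanishing perturbation terms using the joint decay rates of Assumption \ref{assumption:stepSizes} together with Lemmas \ref{lemma1} and \ref{lemma:boundedness}, so that neither the discontinuity nor the link noise can destroy the drift.
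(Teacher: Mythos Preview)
Your route differs from the paper's. The paper does not split into the consensus component $\bm{\omega}_\perp$ and the scalar $e$; it works with $\|\bm{\omega}(i)-\theta_p\mathbf{1}\|^2$ as a single quantity, manufacturing a full-space contraction by writing $\mathbf{I}-\eta(i)\mathbf{L}=(\mathbf{I}-\eta(i)\mathbf{R})+\eta(i)\mathbf{G}$ with $\mathbf{R}=\mathbf{L}+\mathbf{G}$ and $\mathbf{G}=\frac{1}{N}\mathbf{1}\mathbf{1}^T$, so that $\|\mathbf{I}-\eta(i)\mathbf{R}\|=1-\eta(i)\lambda_{\min}(\mathbf{R})<1$. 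The compensating term $\eta(i)(\mathbf{z}(i)-\theta_p\mathbf{1})$, where $\mathbf{z}(i)=\omega_{\text{avg}}(i)\mathbf{1}$, is then carried as one more perturbation; all cross terms are upper-bounded via Cauchy--Schwarz together with $x\le 1+x^2$ (so the sign of $\mathbf{y}(i)$ is never exploited as a drift), and the iterated bound is dispatched term by term with Lemmas~\ref{lemma1} and~\ref{lemma:boundedness}. Your orthogonal decomposition is the more conventional two-time-scale argument and isolates the roles of $\eta$ and $\alpha$ more transparently; the paper's device trades that transparency for not having to analyze the scalar recursion in $e(i)$ separately.

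There is, however, a genuine gap in your plan for the scalar piece. The drift you obtain is $e(i)\bigl[\widehat{F}(\omega_{\text{avg}}(i);\mathbf{x})-p\bigr]\ge \kappa\,|e(i)|$ with $\kappa=\min_k|k/N-p|$, so after squaring and taking expectations the negative term is $-2\alpha(i)\kappa\,E[|e(i)|]$, not $-c\,\alpha(i)\,E[e^2(i)]$. Lemma~\ref{lemma1} requires a recursion of the form $q(i+1)\le (1-r_1(i))q(i)+r_2(i)$; yours is not of this form for $q(i)=E[e^2(i)]$ unless you first supply, for instance, an a.s.\ bound $|e(i)|\le M$ so that $e^2\le M|e|$. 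Relatedly, your claim that the consensus residual $\epsilon(i)$ is ``controlled in second moment by the already-vanishing consensus error'' is not justified as stated: because $u(\cdot)$ is a step function, an arbitrarily small $\|\bm{\omega}_\perp(i)\|$ can still produce $|\epsilon(i)|\ge 1/N$ whenever a single coordinate straddles some $x_n$, and since $\kappa\le 1/(2N)$ this residual can overwhelm the drift. Turning that into a vanishing perturbation requires showing the straddling events have vanishing probability, which Lemmas~\ref{lemma1}--\ref{lemma:boundedness} and Theorem~\ref{theorem:unbiasedness} do not by themselves deliver.
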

\begin{proof} \vspace{-5mm}
	See Appendix \ref{sec:append_proof_theorem:mean_square_conv}. 	
\end{proof}

\vspace{-1mm}
{}{
From \eqref{eq:norm_sq_cauchy_schwarz_N_sigma2} in Appendix \ref{sec:append_proof_theorem:mean_square_conv}, we can observe that several factors affect convergence speed: communication noise variance $\sigma_\xi^2$, network size $N$, step size sequences $\alpha(i)$ and $\eta(i)$, and the distance between average state (denoted as $\mathbf{z}(i)$) and the quantile $\theta_p$. For example, if the decaying rates of $\alpha(i)$ and $\eta(i)$ are fast, the upper bound of $E\big[\| \bm{\omega}(i) - \theta_p \mathbf{1}\|^2\big]$ at $i$ is lowered. If the network size $N$ or the wireless link noise variance is large, the upper bound is also lowered. If the average state sequence, including the initial measurement $\mathbf{x}$ at $i=0$, is far from the true quantile $\theta_p$, then the convergence can be slow.}

\begin{figure}
	\centering
	\vspace{-5mm}
	\includegraphics[width=0.85\linewidth]{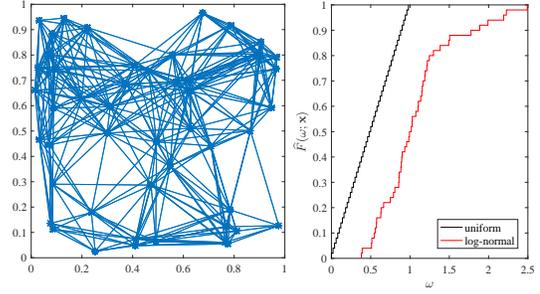}
	\vspace{-10mm}
	\caption{\emph{Left}: A graph for distributed sensor network. \emph{Right}: Empirical CDFs generated from $\{x_n\}_{n=1}^{N}$.}
	\vspace{-4mm}
	\label{fig:networkgeneration}
\end{figure}

\begin{figure}
	\centering
	\includegraphics[width=1\linewidth]{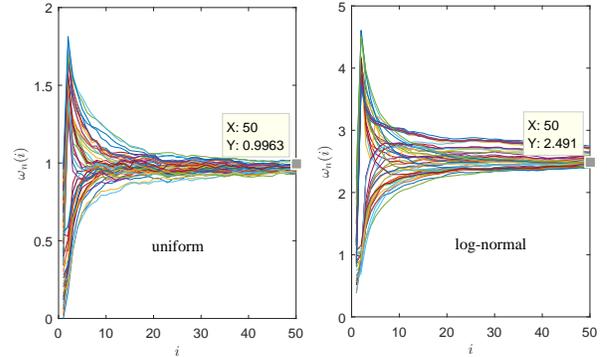}
	\caption{Maximum value estimations by setting $p = 0.99$ for $N=50$ in the presence of communication noise ($\sigma_{\xi}^2 = 0.09$). \emph{Left}: States $\left\{\bm{\omega}_n(i)\right\}_{n=1}^N$ converge toward $\theta_{p} = 0.98$ for the uniform data. \emph{Right}: $\theta_{p} = 2.4927$ for the log-normal data. Minimum can be also achieved by setting $p=0.01$.}
	\vspace{-9mm}
	\label{fig:maxconsensus_unif_lognormal}
\end{figure}

\vspace{-2mm}
\section{Simulations} \label{simulations}
In this section we demonstrate the distributed quantile estimation under various conditions. Consider a distributed sensor network, illustrated in Fig. \ref{fig:networkgeneration}, which is a connected graph with $N=50$ where the graph's connectivity is characterized by $\lambda_{2,\mathbf{L}} = 2.2815$. Note that this graph network is an example we do our experiments but other connected graph networks with different $N$ and $\lambda_{2,\mathbf{L}}$ can be used. Each node $n$ has a scalar measurement $x_n$ taken from a realization of random variable $X$ of either discrete uniform or log-normal distributions. The discrete uniform distribution is normalized, and the actual values are ranging from $0$ to $(N-1)/N$ with $1/N$ increase for each sample. The log-normal distribution was generated by $\ln X \sim \mathcal{N}(0,0.25)$. From all nodes, $N$ realizations of random variable $X$ whose ECDF is illustrated in Fig. \ref{fig:networkgeneration}. Without loss of generality, we can assume that the measurement data is distributed in ascending order: $x_1 \le x_2 \le \cdots \le x_N$. With the set of measurement data $\{x_n\}_{n=1}^N$, also denoted by $\mathbf{x}$ in vector form, a quantile $\theta_p$ of $\mathbf{x}$ is estimated for a desired ratio $p$ in a distributed way. The states $\{\omega_n(i)\}_{n=1}^N$ are recursively updated by the algorithm \eqref{eq:combined_averaging_and_local_update_w_noise}. The initial states $\bm{\omega}(0)$ are the nodes' own measurement data $\mathbf{x}$. Consider $ p = \frac{k-\varepsilon}{N}$ for $\theta_p = x_k$ where $\varepsilon = 0.5$ and $N$ is known to each node. Due to the choice of $p = \frac{k-\epsilon}{N}$, the $k$-th smallest element in $\mathbf{x}$ is estimated by achieving $\omega_n(i) = \theta_p$ for all $n$, as $i \to \infty$. We evaluate mean-squared error for convergence of the estimation by the following metric: 

\vspace{-4mm}
\small
\begin{align}
\frac{1}{N} E \Big[\big\|\bm{\omega}(i) - \theta_p \mathbf{1}\big\|^2\Big],\,\,i \ge 0, \label{eq:theorem2_metric}
\end{align}
\normalsize
where $\frac{1}{N}$ is due to normalization and $E\left[\cdot\right]$ can be approximated by ensemble averaging over 200 realizations. According to Assumption \ref{assumption:stepSizes}, we can set $\tau_1 = 1$ and $\tau_2 = 0.505$. We begin with $\alpha_0 = 1$ and $\eta_0 = 0.5/d_\text{max}$ for $\alpha(i)$ and $\eta(i)$ respectively, where $d_\text{max}$ denotes the maximum degree in graph network.


%


\vspace{-2mm}
\subsection{Distributed Quantile Estimation}
Given the sensor network $N=50$ and measurement data $\mathbf{x}$, suppose that $p=0.99$ is selected with $k=50$ and $\varepsilon=0.5$. Then the distributed algorithm \eqref{eq:combined_averaging_and_local_update_w_noise} estimates $\max(x_1,\ldots,x_N)$, as $i \to \infty$. Fig. \ref{fig:maxconsensus_unif_lognormal} shows that all the states converge toward $\theta_{p=0.99} = 0.98$, which is the maximum value of uniform $\mathbf{x}$ in the presence of communication noise. Similarly, one can estimate the minimum by setting $p=\frac{1-0.5}{50} = 0.01$. More generally, the $k$-th smallest element can be estimated by setting $p = \frac{k-0.5}{N}$.

The algorithm \eqref{eq:combined_averaging_and_local_update_w_noise} is evaluated for different noise variances $\sigma_{\xi}^2$ with the metric \eqref{eq:theorem2_metric}. In Fig. \ref{fig:theorem1_n_2_result_unif_sigma}, the quantile $\theta_{0.89}$ of uniform data was tested. One can see that the estimated states converge toward the true quantile. Fig. \ref{fig:theorem1_n_2_result_unif_sigma_wo_noise} shows the squared error convergence. We use the following metric, since there is no randomness in the absence of communication noise.

\vspace{-4mm}
\small
\begin{align}
\frac{1}{N}\|\mathbf{\bm{\omega}}(i) - \theta_p\mathbf{1}\|^2, \,\,i \ge 0.
\end{align}
\normalsize
One can see that the sequence converges to the true quantile where we experimented with $\theta_{0.01}$, $\theta_{0.49}$, and $\theta_{0.89}$. Note that the $\theta_{0.01}$ and $\theta_{0.49}$ are the minimum and median of $\mathbf{x}$ respectively. The initial trajectories in Fig. \ref{fig:theorem1_n_2_result_unif_sigma_wo_noise} depends on the sensor network structure and the measurement data contained at each node. 

\begin{figure}
	\centering
	\includegraphics[width=1\linewidth]{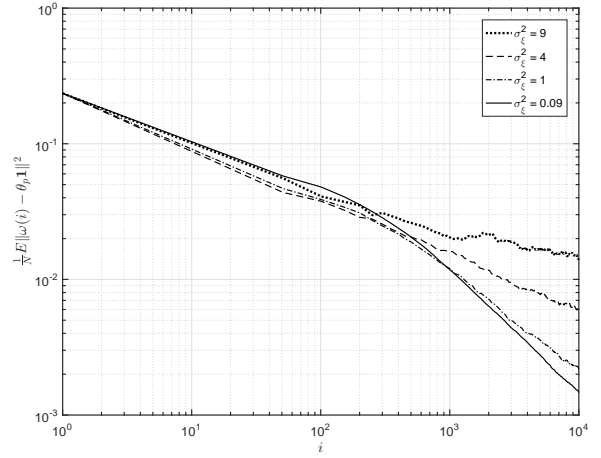}
	\caption{Mean-squared convergence behavior under different communication noise variances.}
	\vspace{-4mm}
	\label{fig:theorem1_n_2_result_unif_sigma}
\end{figure}
\begin{figure}
	\centering
	\includegraphics[width=1\linewidth]{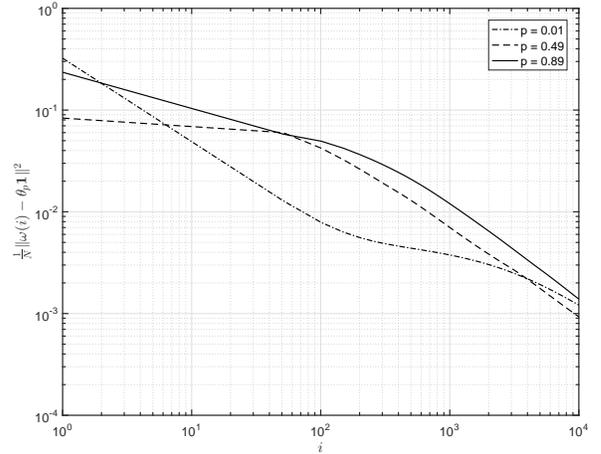}
	\caption{Squared error convergence behavior of various quantiles in the absence of communication noise.}
	\vspace{-4mm}
	\label{fig:theorem1_n_2_result_unif_sigma_wo_noise}
\end{figure}

%

%

\vspace{-1mm}
\subsection{Applications}
\emph{Outlier Identification and Trimmed Mean}: As an application, our algorithm can be used to determine whether individual node measures an outlier value or not. One can judge that larger (or smaller) value than a quantile (e.g., 0.9 or 0.1) is assumed to be outliers. This can be used for robust average consensus, as the estimated mean is not biased by erroneous outliers. Removing outliers is often useful when there exist malicious sensor measurements. If it is identified that the measurement at node $n$ is an outlier, then the node by itself is not averaged with neighboring nodes so that the outliers can be removed when the global average is estimated. An extended application would be the trimmed mean. We often want to average sensor measurements only within the range of $a\% \sim b\%$ where $0 < a < b < 100$. $\theta_{a/100}$ and $\theta_{b/100}$ can be estimated by our algorithm and then individual node can be identified whether they are within the range or not. Then, the trimmed mean is obtained by the average consensus \cite{Boyd2004,Olfati2007} only with the nodes in $[\theta_{a/100},\theta_{b/100}]$. 

\emph{Median Estimation}: A useful metric to measure centrality of sensor measurement data is median. When there are outliers or when the data distribution is skewed, median can be used for a centrality measure of the data. When the data size $N (\ge 2)$ is even, the median can be defined as a value between $x_{0.5N}$ and $x_{0.5N+1}$. By setting $p = 0.5 - \varepsilon/N$, the algorithm \eqref{eq:combined_averaging_and_local_update_w_noise} estimates $x_{0.5N}$, and similarly $x_{0.5N+1}$. When $N$ is odd, the median $\theta_p = x_{\lceil0.5N\rceil}$ is estimated by setting $p = \frac{\lceil0.5N\rceil-\varepsilon}{N}$ where $0<\varepsilon<1$.

\emph{Maximum and Minimum Estimation}: We showed some experimental results of maximum value estimation for the uniform and log-normal data in Fig. \ref{fig:maxconsensus_unif_lognormal}. Minimum value estimation can be obtained similarly by setting $p = \frac{1-\varepsilon}{N}$ where $0 < \varepsilon <1$.

\vspace{-1mm}
\section{Conclusion} \label{conclusion}
We have shown a consensus-based distributed quantile estimation algorithm using empirical CDF with limited size of measurement data. States of a quantile estimation are recursively updated by the combination of \emph{local update} and \emph{averaging} steps in the presence of communication noise. We analyzed convergence behaviors of the algorithm based on mixed-time scale stochastic approximation where the averaging time scale dominates the local update time scale. The estimated state sequence is asymptotically unbiased and converges toward the true quantile in mean-square sense. We demonstrated the performance of algorithm with numerical experiments. Finally, potential applications by using our algorithm were discussed. Maximum, minimum, median, $k$-th smallest element selection out of $N$ elements, outliers identification, and trimmed mean can be obtained in fully distributed sensor networks.

\appendices
\section{Proof for Lemma \ref{lemma:boundedness}} \label{sec:append_proof_lemma:boundedness}
Recall the distributed quantile estimation algorithm \eqref{eq:combined_averaging_and_local_update_w_noise}. Multiplying both sides of \eqref{eq:combined_averaging_and_local_update_w_noise} by $\frac{1}{N} \mathbf{1}^T$ results in 

\vspace{-4mm}
\small
\begin{align}
\omega_{\text{avg}} (i+1) = \omega_{\text{avg}}(i) - \alpha(i) y_{\text{avg}}(i) - \eta(i) \xi_{\text{avg}}(i)
\end{align}
\normalsize
where $\xi_{\text{avg}}(i) \triangleq \frac{1}{N} \mathbf{1}^T \bm{\xi}(i)$ and from \eqref{eq:y_n(i)} 

\vspace{-4mm}
\small
\begin{align}
|y_{\text{avg}}(i)| \triangleq \bigg| \frac{1}{N}\mathbf{1}^T \mathbf{y}(i)\bigg| = \bigg|\frac{1}{N} \sum_{n=1}^{N} u(\omega_n(i) - x_n) - p \bigg| \le 1. \label{eq:y_avg(i)_inequality}
\end{align}
\normalsize
After {}{iterating over} $i$, we have the following stochastic difference equation:

\vspace{-4mm}
\small
\begin{align}
\omega_{\text{avg}} (i+1) = \omega_{\text{avg}}(0) - \sum_{j=0}^{i}\alpha(j) y_{\text{avg}}(j) - \sum_{j=0}^{i}\eta(j) \xi_{\text{avg}}(j). \label{eq:proof_lemma2:w_avg}
\end{align}
\normalsize
{}{By taking $E\left[\cdot\right]$ on both sides of \eqref{eq:proof_lemma2:w_avg}, and due to the inequality of \eqref{eq:y_avg(i)_inequality}, we can obtain}

\vspace{-4mm}
\small
\begin{align}
E\big[\omega_{\text{avg}}(i) - \theta_p\big] \le \omega_{\text{avg}}(0) - \theta_p + \sum_{j=0}^{i-1} \alpha(j),
\end{align}
\normalsize
{}{where the last term of \eqref{eq:proof_lemma2:w_avg} was canceled because $E\big[\xi_{\text{avg}}(i)\big] = 0$ for all $i$.} There exists a decreasing sequence $\eta(i)$ in the form of \eqref{eq:step_size_forms} such that $\limsup_{i \to \infty} \eta(i) \sum_{j=0}^{i-1}\alpha(j) =0$. For example, when $\eta(i) = \frac{1}{(i+1)^{\tau_2}}$ and $\alpha(i) = \frac{1}{(i+1)}$ for $\tau_1=1$ and $0.5< \tau_2 < 1$, there exists $\frac{1}{(i+1)^{\tau_2}} \sum_{j=0}^{i-1}\frac{1}{j+1} < \frac{1}{(i+1)^{\tau_2}} \sum_{j=0}^{i-1} \frac{1}{(j+1)} \frac{(i+1)^{\epsilon}}{(j+1)^{\epsilon}} = \frac{1}{(i+1)^{\tau_2 - \epsilon}} \sum_{j=0}^{i-1}\frac{1}{(j+1)^{1+\epsilon}}$ for all $i>1$ and $0 < \epsilon < 1 - \tau_2 < 0.5$. By Assumption \ref{assumption:stepSizes}, we have

\vspace{-4mm}
\small
\begin{align}
\lim_{i \to \infty}\frac{1}{(i+1)^{\tau_2 - \epsilon}} = 0,\,\,\,\lim_{i\to\infty} \sum_{j=0}^{i-1}\frac{1}{(j+1)^{1+\epsilon}} < \infty.
\end{align}
\normalsize 
Then we can obtain $\limsup_{i \to \infty} \eta(i) \sum_{j=0}^{i-1} \alpha(j) = 0$ for $\tau_1 = 1$ and $0 < \epsilon < 1 - \tau_2 < 0.5$. Therefore, Lemma {}{\ref{lemma:boundedness} is proved.}

\section{Proof for Theorem \ref{theorem:unbiasedness}} \label{sec:append_proof_theorem:unbiasedness}
It is shown that $\|E[\bm{\omega}(i)] - \theta_p \mathbf{1}\|$ converges to 0, as $i \to \infty$. Recall that $\mathbf{L} \cdot \mathbf{1} = \mathbf{0}$. By subtracting $\theta_p \mathbf{1}$ on both sides of \eqref{eq:combined_averaging_and_local_update_w_noise}, it can be rewritten as

\vspace{-4mm}
\small
\begin{align}
\bm{\omega}(i+1) - {\theta_p}\mathbf{1} &= \big(\mathbf{I} - \eta(i)\mathbf{L} \big)  \big(\bm{\omega}(i)- {\theta_p}\mathbf{1} \big)\nonumber \\
&\,\,\,- \alpha(i) \big(\mathbf{I} - \eta(i) \mathbf{L}\big)   \mathbf{y}(i) - \eta(i)\bm{\xi}(i).  \label{eq:theorem1_w(i+1)-theta_1}
\end{align}
\normalsize
Define a rank-1 matrix

\vspace{-4mm}
\small
\begin{align}
\mathbf{G} \triangleq \frac{1}{N}\mathbf{1} \mathbf{1}^T. \label{eq:G_definition}
\end{align}
\normalsize
The average of $\bm{\omega}(i)$ at $i$ can be expressed as 

\vspace{-4mm}
\small
\begin{align}
\mathbf{z}(i) \triangleq \mathbf{G} \bm{\omega}(i) = \omega_{\text{avg}}(i)\mathbf{1}. \label{eq:z(i)definition}
\end{align}
\normalsize
With $\mathbf{z}(i)-\theta_p\mathbf{1} = \mathbf{G}\big(\bm{\omega}(i)-\theta_p\mathbf{1}\big)$, \eqref{eq:theorem1_w(i+1)-theta_1} can be rewritten as

\vspace{-4mm}
\small
\begin{align}
&\bm{\omega}(i+1) - {\theta_p}\mathbf{1} = \Big(\mathbf{I} - \eta(i)\mathbf{R} \Big)  \big(\bm{\omega}(i)- {\theta_p}\mathbf{1} \big) \nonumber \\
&\,\,\,\,\,\,\,\,\,- \alpha(i) \Big(\mathbf{I}  - \eta(i) \mathbf{L}\Big)   \mathbf{y}(i) - \eta(i)\bm{\xi}(i) + \eta(i)\big(\mathbf{z}(i)- {\theta_p}\mathbf{1} \big), \label{eq:theorem1_w(i+1)-theta_1st}
\end{align}
\normalsize
where $\mathbf{R} \triangleq \mathbf{L}+\mathbf{G}$. Taking $E\left[\cdot\right]$ on both sides of \eqref{eq:theorem1_w(i+1)-theta_1st} leads to

\vspace{-4mm}
\small
\begin{align}
&E\big[\bm{\omega}(i+1)\big] - {\theta_p}\mathbf{1} = \Big(\mathbf{I} - \eta(i)\mathbf{R}\Big)  \Big(E\big[\bm{\omega}(i)\big] - {\theta_p}\mathbf{1} \Big) \nonumber \\
&\,\,\,\,\,\,\,\,\,\,- \alpha(i) \Big(\mathbf{I} - \eta(i) \mathbf{L}\Big)   E\big[\mathbf{y}(i)\big] +\eta(i) \Big( E\big[\mathbf{z}(i)\big] - \theta_p\mathbf{1}\Big), \label{eq:E[omega(i)]-thetap_unbiasedness}
\end{align}
\normalsize
where the zero-mean random noise vector $\bm{\xi}(i)$ was canceled {}{due to the expectation}. The smallest eigenvalue of Laplacian matrix $\mathbf{L}$ is equal to zero, and $\|\mathbf{I}-\eta(i)\mathbf{L}\|=1$ for all $i$. Taking $\|\cdot\|$ of both sides of \eqref{eq:E[omega(i)]-thetap_unbiasedness}, by triangle inequality {}{and the property $\|\mathbf{A}\mathbf{b}\| \le \|\mathbf{A}\|\|\mathbf{b}\|$ where $\mathbf{A} \in \mathbb{R}^{N \times N}$ and $\mathbf{b} \in \mathbb{R}^{N \times 1}$}, we have

\vspace{-4mm}
\small
\begin{align}
\Big\|E\big[\bm{\omega}(i+1)&\big] - {\theta_p}\mathbf{1} \Big\| \le \Big\|\mathbf{I} - \eta(i)\mathbf{R} \Big\|   \Big\|E\big[\bm{\omega}(i)\big] - {\theta_p}\mathbf{1} \Big\|\nonumber \\
&\,\,\,\,\, + \alpha(i) \Big\| E\big[\mathbf{y}(i)\big] \Big\| +\eta(i) \Big\| E\big[\mathbf{z}(i)\big] - \theta_p\mathbf{1}\Big\|. \label{eq:norm_E[omega(i)]-thetap_unbiasedness}
\end{align}
\normalsize
{}{$\big\|\mathbf{I} - \eta(i)\mathbf{R}\big\|$ indicates spectral radius which is the maximum eigenvalue of matrix $\mathbf{I} - \eta(i)\mathbf{R}$. For sufficiently large $i$, we have the following inequality:}

\vspace{-4mm}
\small
\begin{align}{}{
\big\|\mathbf{I} - \eta(i)\mathbf{R}\big\| = 1 - \eta(i)\lambda_{\text{min}}(\mathbf{R}) < 1 \label{eq:1-eta(i)lambda}}
\end{align}
\normalsize
{}{where $\lambda_{\text{min}}(\mathbf{R}) = \lambda_{\text{min}}(\mathbf{L}+\mathbf{G}) = \min \left\{\lambda_2(\mathbf{L}), 1\right\}$.} Note that eigenvalues of rank-1 matrix $\mathbf{G}$ are $1,0,\ldots,0$ and the eigenvalues of $\mathbf{L}$ are $0<\lambda_2(\mathbf{L})<\cdots<\lambda_N(\mathbf{L})$. Also, there is a common eigenvector $\mathbf{1}$ for $\mathbf{L}$ and $\mathbf{G}$. In other words, $\big(\mathbf{L}+\mathbf{G}\big)\mathbf{1} = \mathbf{1}$. If $\lambda_2(\mathbf{L}) < 1$, then $\lambda_{\text{min}}(\mathbf{L}+\mathbf{G}) = \lambda_2(\mathbf{L})$. Otherwise, $\lambda_{\text{min}}(\mathbf{L}+\mathbf{G}) = 1$.


%
We simplify the notation and define {}{$\lambda \triangleq \lambda_{\text{min}}(\mathbf{R})$}. By substituting \eqref{eq:1-eta(i)lambda} to \eqref{eq:norm_E[omega(i)]-thetap_unbiasedness} and {}{iterating} over $i-1$, we obtain

\vspace{-4mm}
\small
\begin{align}
&\Big\|E\big[\bm{\omega}(i)\big] - \theta_p \mathbf{1} \Big\| \le \prod_{k=0}^{i-1}\Big(1 - \eta(k)\lambda\Big) \Big\|\bm{\omega}(0) - \theta_p\mathbf{1}\Big\| \nonumber \\
&+\sum_{k=0}^{i-1}\Bigg[\prod_{l=k+1}^{i-1} \Big(1 - \eta(l)\lambda \Big)\Bigg]\, \alpha(k) \Big\|E\big[\mathbf{y}(k)\big]\Big\| \nonumber \\
&+\sum_{k=0}^{i-1}\Bigg[\prod_{l=k+1}^{i-1} \Big(1 - \eta(l)\lambda \Big)\Bigg]\, \eta(k) \Big\| E\big[\mathbf{z}(k)\big] - \theta_p\mathbf{1}\Big\|. \label{eq:unbiasedness:norm_of_expt_omega_thetap}
\end{align}
\normalsize 
We use the property $1-z \le e^{-z}$ for $0 \le z \le 1$. For sufficiently large $k\ge i_0$ there exists positive $\eta(k)\lambda \le 1$. Under Assumption \ref{assumption:stepSizes}, the first term of RHS in \eqref{eq:unbiasedness:norm_of_expt_omega_thetap} goes to zero:

\vspace{-4mm}
\small
\begin{align}
\lim_{i\to\infty}\prod_{k=i_0}^{i-1}\Big(1 - \eta(k) \lambda\Big) \le \lim_{i \to \infty} e^{-\lambda \sum_{k=i_0}^{i-1} \eta(k)} 	= 0.  \label{eq:unbiasedness:1-eta_lambda_goesto_zero}
\end{align}
\normalsize
Since {}{the elements of} $\mathbf{y}(k)$ in \eqref{eq:unbiasedness:norm_of_expt_omega_thetap} are bounded {}{(i.e., $-1 \le {y}_n(k) \le 1$) for all $k$ and $n$} with the step function $u(\cdot)$ and $p$ defined in \eqref{eq:y_n(i)}, we have $\big\|E\big[\mathbf{y}(k)\big]\big\| \le \sqrt{N}$. The second term of RHS in \eqref{eq:unbiasedness:norm_of_expt_omega_thetap} has the following inequality:

\vspace{-4mm}
\small
\begin{align}
\sum_{k=0}^{i-1}\Bigg[\prod_{l=k+1}^{i-1} &\Big(1 - \eta(l)\lambda \Big)\Bigg]\, \alpha(k) \Big\|E\big[\mathbf{y}(k)\big]\Big\| \nonumber \\
&\le \sum_{k=0}^{i-1}\Bigg[\prod_{l=k+1}^{i-1} \Big(1 - \eta(l)\lambda \Big)\Bigg]\, \alpha(k) \sqrt{N}. \label{eq:appendix_proof_th1_2ndTerm}
\end{align}
\normalsize
Thus, we find that the second term of RHS in \eqref{eq:unbiasedness:norm_of_expt_omega_thetap} falls onto the case of \eqref{eq:lemma1:0} in Lemma \ref{lemma1} where the numerators $a_1 = \eta(1)\lambda$ and $a_2 = \alpha(0)\sqrt{N}$ of $r_1(i)$ and $r_2(i)$ in \eqref{eq:lemma1_r1r2}, respectively. {}{Under Assumption \ref{assumption:stepSizes}, $\alpha(k)$ decreases faster than $\eta(l)$ i.e. $\tau_2 < \tau_1$. Then, by Lemma \ref{lemma1}, RHS of \eqref{eq:appendix_proof_th1_2ndTerm}} goes to zero as $i\to\infty$. 

{}{
The third term of RHS in \eqref{eq:unbiasedness:norm_of_expt_omega_thetap} also goes to zero as $i \to \infty$:}

\vspace{-4mm}
\small
\begin{align}
\lim_{i \to \infty} \sum_{k=0}^{i-1}\Bigg[\prod_{l=k+1}^{i-1} \Big(1 - \eta(l)\lambda \Big)\Bigg]  \eta(k) \Big\| E\big[\mathbf{z}(k)\big] - \theta_p\mathbf{1}\Big\| = 0. \label{eq:appendix:3rd_term}
\end{align}
\normalsize
Similarly with \eqref{eq:unbiasedness:1-eta_lambda_goesto_zero}, $\lim_{i \to \infty} \prod_{l=k+1}^{i-1} \Big(1 - \eta(l)\lambda \Big) = 0$ for small $k$. In the case of large $k$, by Lemma \ref{lemma:boundedness},

\vspace{-4mm}
\small
\begin{align}
\lim_{k \to \infty} \eta(k) \Big\| E\big[\mathbf{z}(k)\big] - \theta_p\mathbf{1} \Big\| = 0.
\end{align} 
\normalsize
{}{Therefore, Theorem \ref{theorem:unbiasedness} follows since \small $\lim_{i \to \infty} \Big\|E\big[\bm{\omega}(i)\big] - \theta_p \mathbf{1} \Big\| = 0$ \normalsize in \eqref{eq:unbiasedness:norm_of_expt_omega_thetap}.}
\vspace{-3mm}

\section{Proof for Theorem \ref{theorem:mean-square-convergence}} \label{sec:append_proof_theorem:mean_square_conv}
%
%

{}{We begin with \eqref{eq:theorem1_w(i+1)-theta_1st}. Recall the properties $\big\|\mathbf{I}-\eta(i)\mathbf{L}\big\| = 1$ and $\|\mathbf{A}\mathbf{b}\|^2 \le \|\mathbf{A}\|^2\|\mathbf{b}\|^2$ where $\mathbf{A} \in \mathbb{R}^{N \times N}$ and $\mathbf{b} \in \mathbb{R}^{N \times 1}$. From \eqref{eq:theorem1_w(i+1)-theta_1st} we have}

\vspace{-4mm}
\small
\begin{align}
&\Big\| \bm{\omega}(i+1) - \theta_p \mathbf{1}\Big\|^2 \le \Big\|\mathbf{I} - \eta(i)\mathbf{R}\Big\|^2  \Big\|\bm{\omega}(i)-\theta_p\mathbf{1}\Big\|^2 \nonumber \\
&+\alpha^2(i)\big\|\mathbf{y}(i)\big\|^2 + \eta^2(i)\big\|\bm{\xi}(i)\big\|^2 + \eta^2(i)\Big\|\mathbf{z}(i) - \theta_p\mathbf{1}\Big\|^2 \nonumber \\
&- 2\alpha(i)\bigg[\Big(\mathbf{I} - \eta(i)\mathbf{R} \Big)  \big(\bm{\omega}(i) - {\theta_p}\mathbf{1} \big)\bigg]^T \Big(\mathbf{I} - \eta(i) \mathbf{L}\Big) \mathbf{y}(i) \nonumber \\
&+ 2\eta(i)\bigg[\Big(\mathbf{I} - \eta(i)\mathbf{R} \Big)  \big(\bm{\omega}(i) - {\theta_p}\mathbf{1} \big)\bigg]^T \Big(\mathbf{z}(i)- {\theta_p}\mathbf{1} \Big) \nonumber \\
&-2 \alpha(i) \eta(i) \bigg[ \Big(\mathbf{I} - \eta(i) \mathbf{L}\Big)   \mathbf{y}(i)\bigg]^T \Big(\mathbf{z}(i)- {\theta_p}\mathbf{1} \Big)\nonumber \\
&- 2\eta(i)\bigg[\Big(\mathbf{I} - \eta(i)\mathbf{R} \Big)  \big(\bm{\omega}(i) - {\theta_p}\mathbf{1} \big)\bigg]^T \bm{\xi}(i) \nonumber \\
&+ 2 \alpha(i) \eta(i) \bigg[ \Big(\mathbf{I} - \eta(i) \mathbf{L}\Big)   \mathbf{y}(i)\bigg]^T \bm{\xi}(i) \nonumber \\
&-2 \eta^2(i)\Big(\mathbf{z}(i)- {\theta_p}\mathbf{1} \Big)^T \bm{\xi}(i). \label{eq:norm_sq_mse}
\end{align}
\normalsize
Due to Cauchy-Schwarz inequality and $x \le 1+x^2 $ for any $x \in \mathbb{R}$, the fifth term of \eqref{eq:norm_sq_mse} can be rewritten as 
\vspace{-4mm}
\small
\begin{align}
&- 2\alpha(i)\bigg[\Big(\mathbf{I} - \eta(i)\mathbf{R} \Big)  \big(\bm{\omega}(i) - {\theta_p}\mathbf{1} \big)\bigg]^T \Big(\mathbf{I} - \eta(i) \mathbf{L}\Big) \mathbf{y}(i) \nonumber \\
&\le 2\alpha(i)\Big\|\mathbf{I} - \eta(i)\mathbf{R}  \Big\| \Big\| \bm{\omega}(i) - {\theta_p}\mathbf{1} \Big\| \Big\|\mathbf{y}(i)\Big\| \nonumber \\
&\le 2\alpha(i)\bigg[ 1+ \Big\|\mathbf{I} - \eta(i)\mathbf{R}  \Big\|^2 \Big\| \bm{\omega}(i) - {\theta_p}\mathbf{1} \Big\|^2 \bigg]\Big\|\mathbf{y}(i)\Big\|. \label{eq:norm_sq_mse_cauchy_schwarz}
\end{align}
\normalsize
Similarly the sixth and seventh terms, respectively, can be rewritten as

\vspace{-4mm}
\small
\begin{align}
&2\eta(i)\bigg[\Big(\mathbf{I} - \eta(i)\mathbf{R} \Big)  \big(\bm{\omega}(i) - {\theta_p}\mathbf{1} \big)\bigg]^T \Big(\mathbf{z}(i)- {\theta_p}\mathbf{1} \Big) \nonumber \\
&\le 2\eta(i)\Big\|\mathbf{I} - \eta(i)\mathbf{R}  \Big\| \Big\| \bm{\omega}(i) - {\theta_p}\mathbf{1} \Big\| \Big\|\mathbf{z}(i)- {\theta_p}\mathbf{1} \Big\| \nonumber \\
&\le 2\eta(i)\bigg[ 1+ \Big\|\mathbf{I} - \eta(i)\mathbf{R}  \Big\|^2 \Big\| \bm{\omega}(i) - {\theta_p}\mathbf{1} \Big\|^2 \bigg] \Big\|\mathbf{z}(i)- {\theta_p}\mathbf{1} \Big\| \label{eq:norm_sq_mse_cauchy_schwarz2}
\end{align}
\normalsize
and 

\vspace{-4mm}
\small
\begin{align}
- 2\alpha(i)\eta(i)\bigg[ \Big(\mathbf{I} - &\eta(i) \mathbf{L}\Big)   \mathbf{y}(i)\bigg]^T \Big(\mathbf{z}(i)- {\theta_p}\mathbf{1} \Big)\nonumber \\
&\le 2\alpha(i)\eta(i)\big\| \mathbf{y}(i) \big\| \big\|\mathbf{z}(i)- {\theta_p}\mathbf{1} \big\| . \label{eq:norm_sq_mse_cauchy_schwarz3}
\end{align}
\normalsize
Substituting \eqref{eq:norm_sq_mse_cauchy_schwarz}, \eqref{eq:norm_sq_mse_cauchy_schwarz2}, and \eqref{eq:norm_sq_mse_cauchy_schwarz3} into \eqref{eq:norm_sq_mse} and taking $E[\cdot]$ on both sides of \eqref{eq:norm_sq_mse}, we obtain

\vspace{-4mm}
\small
\begin{align}
&E\bigg[\Big\| \bm{\omega}(i+1) - \theta_p \mathbf{1}\Big\|^2\bigg] \le \Big\|\mathbf{I} - \eta(i)\mathbf{R}\Big\|^2  E\Big[\big\|\bm{\omega}(i)-\theta_p\mathbf{1}\big\|^2\Big] \nonumber \\
&+ \alpha^2(i)E\Big[\big\|\mathbf{y}(i)\big\|^2\Big] + \eta^2(i)E\Big[\big\|\bm{\xi}(i)\big\|^2\Big] \nonumber \\
&+ \eta^2(i)E\Big[\big\|\mathbf{z}(i) - \theta_p\mathbf{1}\big\|^2\Big] + 2\alpha(i) E\Big[\big\|\mathbf{y}(i)\big\|\Big] \nonumber \\
&+ 2\eta(i) E\Big[\big\|\mathbf{z}(i)- {\theta_p}\mathbf{1}\big\| \Big] \nonumber \\
&+ 2\alpha(i) \Big\|\mathbf{I} - \eta(i)\mathbf{R}  \Big\|^2 E\Big[\big\| \bm{\omega}(i) - {\theta_p}\mathbf{1} \big\|^2 \big\|\mathbf{y}(i)\big\| \Big]  \nonumber \\
&+ 2\eta(i) \Big\|\mathbf{I} - \eta(i)\mathbf{R}  \Big\|^2 E\Big[\big\| \bm{\omega}(i) - {\theta_p}\mathbf{1} \big\|^2 \big\|\mathbf{z}(i) - \theta_p \mathbf{1}\big\| \Big]  \nonumber \\
&+ 2\alpha(i)\eta(i) E \Big[\big\| \mathbf{y}(i) \big\| \big\|\mathbf{z}(i)- {\theta_p}\mathbf{1} \big\| \Big] \nonumber \\
&\le \bigg( 1 + 2\alpha(i)E\Big[\big\|\mathbf{y}(i)\big\|\Big] + 2\eta(i) E\Big[\big\|\mathbf{z}(i) - \theta_p \mathbf{1}\big\|\Big]\bigg)\nonumber \\
&\,\,\,\,\,\,\,\,\,\,\,\,\,\,\,\,\,\,\,\,\,\,\,\,\,\,\,\,\,\,\,\,\,\,\,\,\,\,\,\,\,\,\,\,\,\,\,\,\,\,\,\,\,\,\,\,\,\,\,\,\,\,\,\,\,\,\,\,\cdot\Big\|\mathbf{I} - \eta(i)\mathbf{R}\Big\|^2  E\Big[\big\|\bm{\omega}(i)-\theta_p\mathbf{1}\big\|^2\Big] \nonumber \\
&+ \alpha^2(i)E\Big[\big\|\mathbf{y}(i)\big\|^2\Big] + \eta^2(i)E\Big[\big\|\bm{\xi}(i)\big\|^2\Big] \nonumber \\
&+ \eta^2(i)E\Big[\big\|\mathbf{z}(i) - \theta_p\mathbf{1}\big\|^2\Big] + 2\alpha(i) E\Big[\big\|\mathbf{y}(i)\big\|\Big] \nonumber \\
&+ 2\eta(i) E\Big[\big\|\mathbf{z}(i)- {\theta_p}\mathbf{1}\big\| \Big] + 2\alpha(i)\eta(i) E \Big[ \big\| \mathbf{y}(i) \big\| \big\|\mathbf{z}(i)- {\theta_p}\mathbf{1} \big\| \Big] \nonumber \\
&\le \bigg( 1 + 2\alpha(i)\sqrt{N} + 2\eta(i) E\Big[\big\|\mathbf{z}(i) - \theta_p \mathbf{1}\big\|\Big] \bigg) \nonumber \\
&\,\,\,\,\,\,\,\,\,\,\,\,\,\,\,\,\,\,\,\,\,\,\,\,\,\,\,\,\,\,\,\,\,\,\,\,\,\,\,\,\,\,\,\,\,\,\,\,\,\,\,\,\,\,\,\,\,\,\,\,\,\,\,\,\,\,\,\,\cdot\Big\|\mathbf{I} - \eta(i)\mathbf{R}\Big\|^2   E\Big[\big\|\bm{\omega}(i)-\theta_p\mathbf{1}\big\|^2\Big] \nonumber \\
&+\alpha(i) \Big(\alpha(i)N + 2\sqrt{N}\Big) + \eta^2(i)N\sigma_\xi^2 \nonumber \\
& + \eta^2(i)E\Big[\big\|\mathbf{z}(i) - \theta_p\mathbf{1}\big\|^2\Big] + 2\eta(i) E\Big[\big\|\mathbf{z}(i)- {\theta_p}\mathbf{1}\big\| \Big] \nonumber \\
&+ 2\alpha(i)\eta(i)\sqrt{N} E\Big[ \big\|\mathbf{z}(i)- {\theta_p}\mathbf{1} \big\| \Big], \label{eq:norm_sq_cauchy_schwarz_N_sigma}
\end{align}
\normalsize
where {}{the last inequality is due to $\big\|\mathbf{y}(i)\big\| \le \sqrt{N}$ for all $i$ and $\sigma_\xi^2$ denotes noise variance that is bounded as}

\vspace{-4mm}
\small  {}{
\begin{align}
\sigma_\xi^2 = \sup_i E\Big[\big\|\bm{\xi}(i)\big\|^2\Big] \le N d_{\text{max}} \sup_{n,l,i} E \Big[\xi_{nl}^2(i)\Big] < \infty
\end{align}
\normalsize
for $1 \le n,l \le N$, $i \ge 0$, and the maximum degree $d_{\text{max}}$ of network.}

Recall \eqref{eq:1-eta(i)lambda} and $\big(1-\eta(k)\lambda\big)^2 \le 1-\eta(k)\lambda$. Let $\gamma(i) \triangleq 2\alpha(i)\sqrt{N} + 2\eta(i) E\Big[\big\|\mathbf{z}(i) - \theta_p \mathbf{1}\big\|\Big]$ in \eqref{eq:norm_sq_cauchy_schwarz_N_sigma}. After recursions of \eqref{eq:norm_sq_cauchy_schwarz_N_sigma} from 0 up to $i-1$, we have 

\vspace{-4mm}
\footnotesize 
\begin{align}
&E\bigg[\Big\| \bm{\omega}(i) - \theta_p \mathbf{1}\Big\|^2\bigg] \le \prod_{k=0}^{i-1} \Big( 1 + \gamma(k) \Big)\Big(1-\eta(k)\lambda\Big) \big\|\bm{\omega}(0)-\theta_p\mathbf{1}\big\|^2 \nonumber \\
&+\sum_{k=0}^{i-1} \Bigg[\prod_{l=k+1}^{i-1}\Big( 1 + \gamma(l) \Big)\Big(1-\eta(l)\lambda\Big)\Bigg]\alpha(k)\Big(\alpha(k)N + 2\sqrt{N}\Big) \nonumber \\
&+\sum_{k=0}^{i-1} \Bigg[\prod_{l=k+1}^{i-1}\Big( 1 + \gamma(l)\Big)\Big(1-\eta(l)\lambda\Big)\Bigg]\eta^2(k)N\sigma_\xi^2 \nonumber \\
&+\sum_{k=0}^{i-1} \Bigg[\prod_{l=k+1}^{i-1}\Big( 1 + \gamma(l)\Big)\Big(1-\eta(l)\lambda\Big)\Bigg] \eta^2(k)E\Big[\big\|\mathbf{z}(k) - \theta_p\mathbf{1}\big\|^2\Big]  \nonumber \\
&+2 \sum_{k=0}^{i-1} \Bigg[\prod_{l=k+1}^{i-1}\Big( 1 + \gamma(l)\Big)\Big(1-\eta(l)\lambda\Big)\Bigg] \eta(k) E\Big[\big\|\mathbf{z}(k)- {\theta_p}\mathbf{1}\big\| \Big] \nonumber \\
&+2 \sum_{k=0}^{i-1} \Bigg[\prod_{l=k+1}^{i-1}\Big( 1 + \gamma(l)\Big)\Big(1-\eta(l)\lambda\Big)\Bigg] \alpha(k)\eta(k)\sqrt{N} \nonumber \\
&\,\,\,\,\,\,\,\,\,\,\,\,\,\,\,\,\,\,\,\,\,\,\,\,\,\,\,\,\,\,\,\,\,\,\,\,\,\,\,\,\,\,\,\,\,\,\,\,\,\,\,\,\,\,\,\,\,\,\,\,\,\,\,\,\,\,\,\,\,\,\,\,\,\,\,\,\,\,\,\,\,\,\,\,\,\,\,\,\,\,\,\,\,\,\,\,\,\cdot E\Big[ \big\|\mathbf{z}(k)- {\theta_p}\mathbf{1} \big\| \Big] \label{eq:norm_sq_cauchy_schwarz_N_sigma2}
\end{align} 
\normalsize
For sufficiently large $k$ there exists a positive constant $c_1$:

\vspace{-4mm}
\small
\begin{align}
\Big( 1 + \gamma(k)\Big)\Big(1-\eta(k)\lambda\Big) \le 1 - \eta(k)c_1 < 1.\label{eq:theorem2_1-eta(k)c(k)}
\end{align}
\normalsize
Substituting \eqref{eq:theorem2_1-eta(k)c(k)} into \eqref{eq:norm_sq_cauchy_schwarz_N_sigma2}, we can rewrite \eqref{eq:norm_sq_cauchy_schwarz_N_sigma2} as 

\vspace{-4mm}
\small
\begin{align}
&E\Big[\big\| \bm{\omega}(i) - \theta_p \mathbf{1}\big\|^2\Big] \le \prod_{k=0}^{i-1} \Big(1-\eta(k)c_1\Big) \big\|\bm{\omega}(0)-\theta_p\mathbf{1}\big\|^2 \nonumber \\
&+\sum_{k=0}^{i-1} \Bigg[\prod_{l=k+1}^{i-1}\Big(1-\eta(l)c_1\Big)\Bigg]\alpha(k)\Big(\alpha(k)N + 2\sqrt{N}\Big) \nonumber \\
&+\sum_{k=0}^{i-1} \Bigg[\prod_{l=k+1}^{i-1}\Big(1-\eta(l)c_1\Big)\Bigg]\eta^2(k)N\sigma_\xi^2 \nonumber \\
&+\sum_{k=0}^{i-1} \Bigg[\prod_{l=k+1}^{i-1}\Big(1-\eta(l)c_1\Big)\Bigg] \eta^2(k)E\Big[\big\|\mathbf{z}(k) - \theta_p\mathbf{1}\big\|^2\Big] \nonumber \\
&+2\sum_{k=0}^{i-1} \Bigg[\prod_{l=k+1}^{i-1}\Big(1-\eta(l)c_1\Big)\Bigg] \eta(k) E\Big[\big\|\mathbf{z}(k)- {\theta_p}\mathbf{1}\big\| \Big]  \nonumber \\
&+2\sum_{k=0}^{i-1} \Bigg[\prod_{l=k+1}^{i-1}\Big(1-\eta(l)c_1\Big)\Bigg] \alpha(k)\eta(k)\sqrt{N} E\Big[ \big\|\mathbf{z}(k)- {\theta_p}\mathbf{1} \big\| \Big]. \label{eq:norm_sq_cauchy_schwarz_N_sigma3}
\end{align} 
\normalsize
The first term of RHS in \eqref{eq:norm_sq_cauchy_schwarz_N_sigma3} converges to zero as $i \to \infty$ by \eqref{eq:unbiasedness:1-eta_lambda_goesto_zero}. The second, third, fourth, and sixth terms of RHS in \eqref{eq:norm_sq_cauchy_schwarz_N_sigma3} fall onto the case of $\delta_1 < \delta_2$ in Lemma \ref{lemma1}, and they go to zero. {}{The fifth term of RHS in \eqref{eq:norm_sq_cauchy_schwarz_N_sigma3} also goes to zero, similarly with \eqref{eq:appendix:3rd_term}, therefore proving \eqref{eq:theorem2}. }

%



%
%
%

\ifCLASSOPTIONcaptionsoff
  \newpage
\fi



\bibliographystyle{IEEEtran}
\bibliography{references}
%

%








\end{document}